\newtheorem{theorem}{Theorem}
\begin{document}

\title{Inferring signed social networks from contact patterns}
\author{D\'avid Ferenczi}
\email{david.ferenczi@maastrichtuniversity.nl}
\affiliation{Department of Data Analytics and Digitalisation, School of Business and Economics, Maastricht University, 6211 LM Maastricht, Netherlands}
\author{Jean-Gabriel Young}
\email{jean-gabriel.young@uvm.edu}
\affiliation{Department of Mathematics and Statistics, University of Vermont, Burlington, VT 05405, USA}
\affiliation{Vermont Complex Systems Institute, University of Vermont, Burlington, VT 05405, USA}
\author{Leto Peel}
\email{l.peel@maastrichtuniversity.nl}
\affiliation{Department of Data Analytics and Digitalisation, School of Business and Economics, Maastricht University, 6211 LM Maastricht, Netherlands}

\begin{abstract}
    Social networks are typically inferred from indirect observations, such as proximity data; yet, most methods cannot distinguish between absent relationships and actual negative ties, as both can result in few or no interactions. 
    We address the challenge of inferring signed networks from contact patterns while accounting for whether a lack of interactions reflects a lack of opportunity as opposed to active avoidance.
    We develop a Bayesian framework with MCMC inference that models interaction groups to separate chance from choice when no interactions are observed. 
    Validation on synthetic data demonstrates superior performance compared to natural baselines, particularly in detecting negative edges. 
    We apply our method to French high school contact data to reveal a structure consistent with friendship surveys and demonstrate the model's adequacy through posterior predictive checks.
\end{abstract}

\maketitle

\section{Introduction}

Networks are often used to represent complex social systems, where nodes represent individuals and edges represent relationships between them. 
Signed networks can provide a richer representation of social systems than their unsigned counterparts by allowing for the distinction between positive and negative relationships according to the signs that annotate the edges.
This distinction is crucial for understanding social dynamics, as positive and negative relationships can have different effects on individual behavior and group dynamics~\cite{harrigan2020negative}.
Signed networks are used in social science to study phenomena such as social balance~\cite{harary1953notion}, conflict and cooperation~\cite{hiller2017friends}, and the spread of information and influence~\cite{liu2019influence, he2019information}. 
More recently, signed networks have been used in the analysis of online social media platforms~\cite{leskovec2010signed, tang2016survey}, including analyzing trust and reputation in online platforms~\cite{guha2004propagation}, identifying polarized communities~\cite{polarized_communities} and modelling recommendation systems~\cite{tang2016recommendations}. Furthermore, dynamical models applied to signed networks have proven valuable in examining how structural balance and factions evolve over time when interactions are driven by perceived sentiments~\cite{shang2020}.	

Networks are not, in general, directly observable---they are a mathematical abstraction that must be connected to observations by models~\cite{butts2009revisiting, jacobs2021measurement}.
We must therefore reconstruct networks from data based on observations or measurements that are typically indirect and/or noisy~\cite{peel2022statistical}.
Social networks are no exception as typically the social relationships we wish to study are not directly observable.
Instead, we rely on reconstructing these relationships from data that include surveys, proximity measurements, or communication logs~\cite{Mastrandrea_Fournet_Barrat_2015}---a process complicated by measurement errors and the inherent ambiguity of infrequent interactions~\cite{redhead2023reliable,de2023latent}. 
In this work, we address the challenge of reconstructing positive and negative social relationships from contact interactions based on frequency of physical proximity between pairs of individuals over time. 
This type of contact data is often collected in studies of human~\cite{cattuto2010dynamics, Mastrandrea_Fournet_Barrat_2015} and animal~\cite{whitehead2008analyzing} social behavior, and typically sourced by manually observing subjects or using sensors that automatically record proximity or interactions~\cite{eagle2006reality, cattuto2010dynamics}.

We approach this reconstruction problem with the assumption that the data we observe are \emph{generated} by an unobserved signed social network.
If two individuals are close friends, for example, we might expect them to interact more frequently and often find them in physical proximity.
This connection between a latent quantity (the social network) and observations (interaction frequencies) enables us to make inferences about what the network might look like without ever directly observing it.

Various statistical models coupled with Bayesian inference have previously been used to reconstruct networks from noisy and/or incomplete data~\cite{young2021bayesian,peixoto2018reconstructing,guimera2009missing} and from indirect observations~\cite{braunstein2019network,peixoto2019network}. However, these methods do not provide the means to reconstruct signed networks. A number of methods have been proposed to reconstruct signed social networks. 
Some of these methods are based on reconstructing edges independently, framed as a regression model~\cite{10.1145/1772690.1772756}, while others reconstruct networks holistically, incorporating approaches such as bipartite projection and backbone extraction~\cite{NEAL202280}, hypergeometric random graphs~\cite{Andres2023}, and temporal network-based approaches~\cite{gelardi2019detecting}. 
However, these methods for signed network reconstruction all struggle to distinguish between the absence of a relationship and the presence of an actual negative tie, as both can manifest as infrequent interactions in the data.
This ambiguity presents a critical challenge in signed network reconstruction. 
Consequently, existing methods tend to over or underrepresent negative edges, either by misclassifying absent relationships as negative ties or by failing to detect genuine negative relationships due to a lack of observed interactions. 

We address these challenges by developing a data generating process that explicitly accounts for interaction opportunities through the concept of interaction groups.
This approach enables us to distinguish whether a lack of interaction results from the absence of opportunity (chance) or active avoidance (choice), thereby providing a principled framework for differentiating whether a lack of interactions occurs by chance or by choice.
We present a Markov chain Monte Carlo inference algorithm to reconstruct signed networks and provide uncertainty quantification for our estimates.
We validate our method on synthetic data, demonstrating its superior performance, particularly in detecting negative edges, compared to natural baselines.
Finally, we apply our method to real-world contact data collected in a French high school~\cite{Mastrandrea_Fournet_Barrat_2015}, revealing a signed network structure consistent with friendship surveys and demonstrating our method's adequacy through posterior predictive checks.

\section{Generating interaction data from signed networks}
\label{sec:model}
Our approach to signed network reconstruction relies on making a connection between observed pairwise frequencies of interactions and a latent signed network of relationships by defining a generative process. 
The latent signed network $\bm A$ is a symmetric matrix, in which each entry $A_{ij}$ can be $-1$, $+1$, or $0$ to represent negative, positive, or absent relationships respectively.
We assume that a pair of individuals will only interact with each other according to their latent relationship $A_{ij}$ if they have the opportunity to do so, i.e., if they are in the same location at a given time.
This information is encoded in the partition vector $\bm{g}\in \mathbb{N}^n$ that assigns each of the $n$ nodes to a group for the duration of a measurement period.
Figure \ref{fig:interaction_mod} gives a conceptual overview of how the probability of interaction depends on the latent network and group assignment.
If two individuals $i$ and $j$ are in the same group, i.e., $g_i=g_j$, then they have the opportunity to freely interact.
The frequency of interactions between $i$ and $j$ depends on the 
probability of interaction $p_{A_{ij}} \in \bm{p} = \{p_{-1},p_0,p_1\}$, which corresponds to the interaction probability that is conditioned on the latent signed relationship $A_{ij}$.
However, if $g_i\neq g_j$, then individuals $i$ and $j$ do not have the opportunity to interact freely and instead interact with probability $q$ that represents the rate of random chance encounters. 
The frequency of observed interactions is recorded in the observation matrix $\bm{X}$, where each element of the observation matrix $X_{ij}$ records the number of times two individuals $i$ and $j$ interact with one another.

\begin{figure*}
	\centering
	\includegraphics[width=0.8\linewidth]{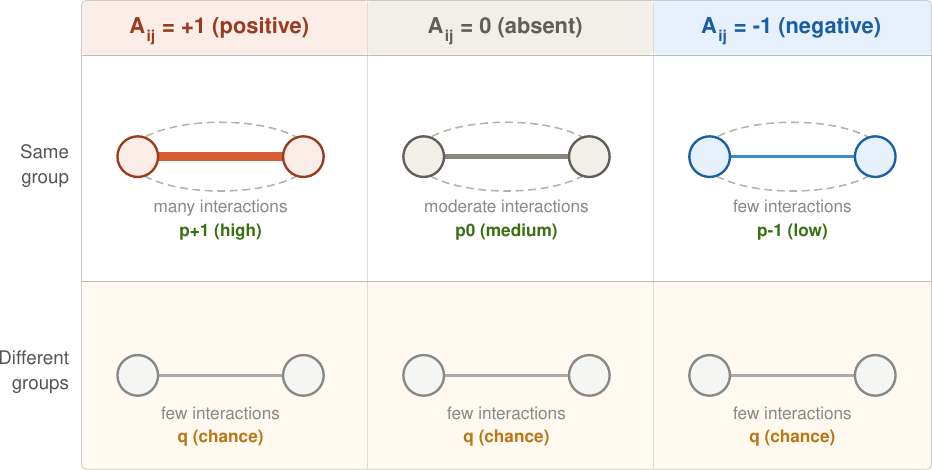}
	\caption{Illustration of how interaction probabilities depend on the latent social network and group membership. The model separates chance, governed by the inter-group interaction rate $q$, from choice, governed by the intra-group interaction rates $p_{-1}, p_0, p_{1}$.}
	\label{fig:interaction_mod}
\end{figure*}

Figure~\ref{fig:figure_example_data_gen} shows an overview of the model and inference task.
We can see, for example, that positive relationships result in more interactions and that individuals interact more frequently with other individuals only if given the opportunity. 
We also observe that the final result, our inference target, is a probabilistic description of the latent interactions $\bm{A}$, which is summarized here by marginal probabilities for each edge.
Even if $\bm{A}$ is never observed directly, Bayesian inference allows us to reconstruct the network's structure well.

The idea of using network partitions to modulate probability of interaction resembles network reconstruction approaches~\cite{guimera2009missing,peixoto2018reconstructing} based on the stochastic block model~\cite{holland1983stochastic, nowicki2001estimation}, where edges are generated based on group memberships. 
An important difference between our model and these previous approaches is that in our model the partition represents interaction opportunities and therefore influences the structure in the data, rather than directly influence structure of the network itself.

\begin{figure}[t]
    \includegraphics[width=0.8\columnwidth]{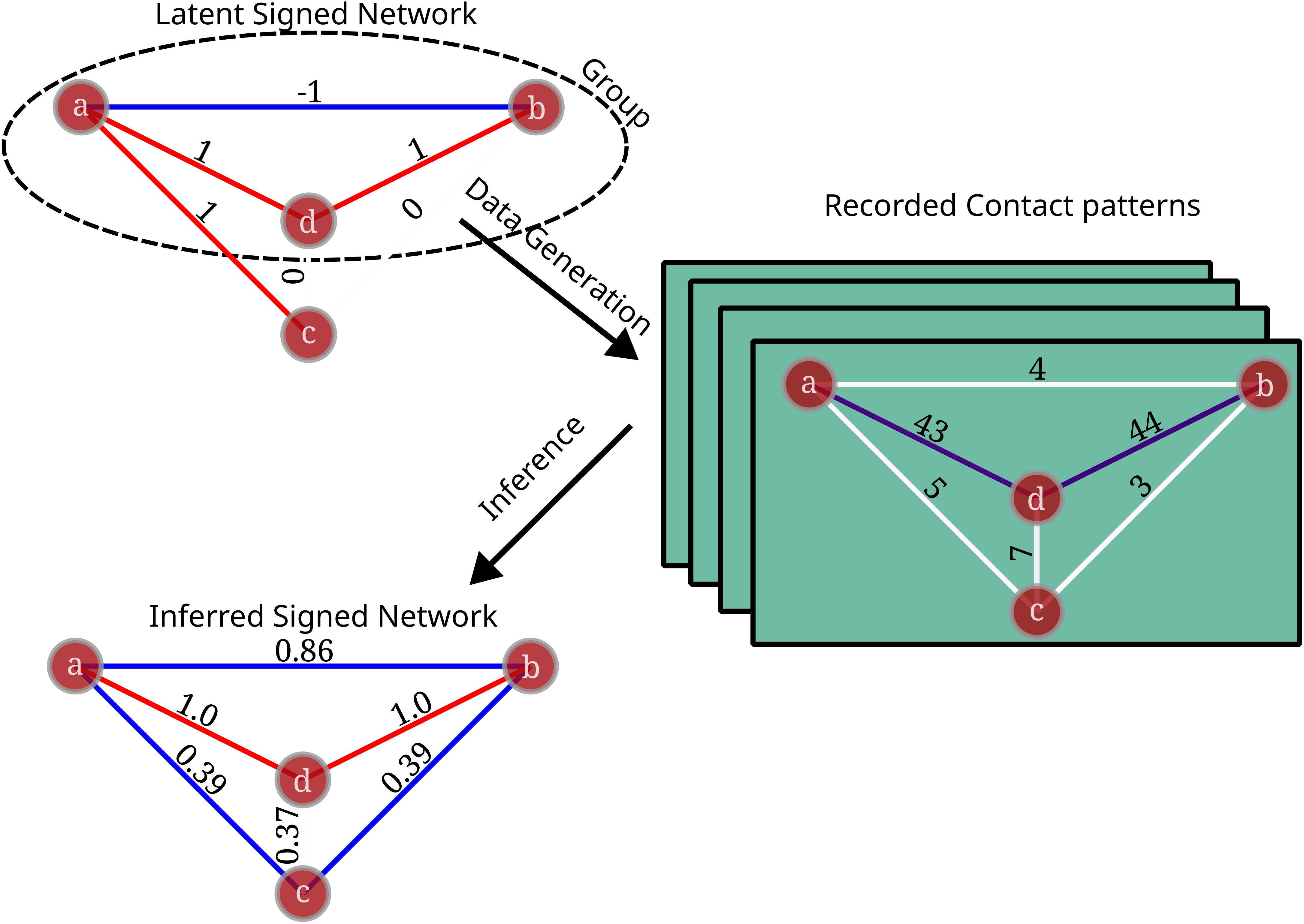}
    \caption{
        Data generation and network reconstruction process. 
        (\textbf{Top}) The latent signed network is the target of inference. 
        (\textbf{Middle}) We observe recorded contact patterns created with the binomial model of Eq.~\eqref{eq:obs_desc}.
        In this example, individuals $a$, $b$, and $d$ end up in close proximity (i.e., in the same group) and can interact with high probability, while individual $c$ is by itself.
        This leads to, for example, 43 recorded interactions between individuals $a$ and $d$, a comparatively high number since they have a positive relationship, and four interactions between individuals $a$ and $b$, since they have a negative relationship.
        Individual $c$ has few interactions with others, as it is not in the same group.
        (\textbf{Bottom}) Our inference algorithm ascribes a probability to every interaction from the recorded contact patterns alone.
        For instance, we find that the interaction between $a$ and $d$ is very likely positive (probability 1.0), while $a$ and $b$ very likely have a negative relationship (probability 0.86).
        We are less certain about the interaction between $c$ and the other individuals because the model treats observations between groups as noisy.
    }
    \label{fig:figure_example_data_gen}
\end{figure}

Our model has two components: (i) a component that generates a latent signed network $\bm A$, and 
(ii) a component that generates the observation matrix $\bm X$ conditioned on the signed network $\bm A$, the partition $\bm{g}$, and the interaction rates $\bm{p},q$. 
Combining these two components with priors for the partition $\mathbb P(\bm g)$ and the parameters $f(\bm p,q)$ completes our Bayesian model and allows us to calculate the posterior density over signed networks and parameters,
\begin{equation}
    \label{eq:bayes}
    f(\bm{A},\bm g, \bm p,q\mid\bm{X}) 
    \propto \mathbb P(\bm{X}\mid\bm{A}, \bm g, \bm p)\, \mathbb P(\bm g)\, \mathbb P(\bm{A})\, f(\bm p,q) \enspace.
\end{equation}

The central task at hand is then to estimate the network $\bm{A}$.
The posterior distribution also contains information about the partition $\bm g$ and the model parameters~\cite{young2021bayesian}, however since these are not our focus, we can marginalize them out,
\begin{equation}
    \label{eq:bayes:marg}
    \mathbb P(\bm{A}\mid\bm{X}) \propto \sum_{\bm g} \int_{\bm p,q }  f(\bm{A},\bm g, \bm p,q\mid\bm{X}) \enspace.
\end{equation}

This distribution tells us which signed networks are likely given an observation matrix $\bm{X}$. 
To obtain a point estimate of the reconstructed network, we can compute the mean of the posterior marginal: 
\begin{align}
    \label{eq:marginal_mean}
    \hat{A}_{ij} =  (-1) \cdot P(A_{ij} = -1 \mid X) 
     + 0 \cdot P(A_{ij} = 0 \mid X) 
    + (+1) \cdot P(A_{ij} = +1 \mid X) \enspace.
\end{align}

\subsection{Signed network}
\label{subsec:partition_model}
We start by specifying the details of generating latent signed networks. 
We represent each element $A_{ij}$ as being generated from a simple categorical distribution,
\begin{equation}
    A_{ij}\sim\mathrm{Categorical}(\rho) \enspace,
\end{equation}
where the possible outcomes are $A_{ij}\in\{-1,0,1\}$, and where $\rho=[\rho_{-1},\rho_0,\rho_1]$ is a vector of prior probabilities of each edge type.
This choice assumes that all edges are independent \textit{a priori}, leading to our prior distribution over networks,
\begin{equation}
     \mathbb P(\bm{A}\mid\rho) =  \prod_{1\leq i<j\leq n} \rho_{A_{ij}} \enspace.
\end{equation} 

It would be possible to incorporate a hierarchical prior for $\rho$ in our model, allowing us to infer these values from the data, but at a significantly increased computational cost.
Instead, we fixed the values of $\rho=(1/3, 1/3, 1/3)$ as we found that this choice worked well in practice and setting different values did not have a significant impact on the result; see Sec.~\ref{sec:prior_experiments}.
This choice formally reduces the network prior to a constant,
\begin{align}
     \mathbb P(\bm{A}) =  \left(\frac{1}{3}\right)^{\binom{n}{2}} \enspace.
\end{align}


\subsection{Observation matrix} 
\label{sec:priors}

A core assumption of our model is that individuals interact with a frequency according to their latent relationships if given the opportunity.
The opportunity to interact is encoded in the partition $\bm{g}$, for which we choose a uniform prior over all possible partitions, 
\begin{equation}
    \mathbb P(\bm {g}) =1/B_n \enspace,
\end{equation}
where $n$ is the number of nodes in the network and $B_n$ is the $n$-th Bell number. 

When a pair of individuals $i$ and $j$ are in the same group, i.e., $g_i=g_j$, they have the opportunity to interact according to their relationship $\bm A_{ij}$ with probability $\bm{p}=\{p_{-1}, p_0, p_1\}$.
When they are in different groups, i.e., $g_i\neq g_j$, they interact with probability $q$ that represents random chance encounters.
We model these opportunity groups $\bm{g}$ as stable for a given observation window, though of course, in reality these groups may change over time. We account for this variability by using short observation windows, such that social groups do not reorganize significantly.

To encode the idea that positive relationships lead to more interactions, we impose the constraint $0< p_{-1}< p_0<p_1<1$.
We make no further assumptions beyond this constraint and model these variables as uniform random variables.
The joint density of $p$ and $q$ is thus a constant, since we can think of the distribution over $p$ as a distribution over a 2-simplex (of volume $1/3!$), and $f(q)=1$. This leads to
\begin{equation}
    f(\bm p,q) = f(\bm p)f(q)= 6 \cdot \mathds 1_{p_{-1}\leq p_0\leq p_1} (\bm p) \enspace,
\end{equation}
where $\mathds 1_A(x)$ is an indicator function, equal to $1$ if $x\subset A$ and $0$ otherwise.

With these definitions in place, we model the random variable $X_{ij}$, corresponding to the observed number of interactions between individuals $i$ and $j$ throughout a data collection period of length $t$ as a binomial random variable.
We distinguish two cases: a pair of individuals are either in the same group or not,
\begin{align}
    \label{eq:obs_desc}
    \bm X_{ij}=
    \left\{
        \begin{array}{ll}
            \mathrm{Binomial}(t,p_{A_{ij}})&\text{if } g_i= g_j \enspace,\\
            \mathrm{Binomial}(t,q)&\text{if } g_i\neq g_j \enspace,
        \end{array}
    \right.
\end{align}
where the index $A_{ij}$ indicates the relevant sign for the interaction.

We assume these pairwise observations are conditionally independent given the latent network and the parameters. 
This conditional independence assumption allows us to write the likelihood---the probability of the observations---as
\begin{align}
    \label{eq:likelihood}
    \mathbb{P}(\bm{X}\mid \bm{A}, \bm{g}, \bm{p},q) 
    = & \prod_{1\leq i<j \leq n} \left[{t \choose X_{ij}} p_{A_{ij}}^{X_{ij}}\cdot (1- p_{A_{ij}})^{t-X_{ij}}\right]^{\delta_{ g_i g_j}}\times\left[
    {t \choose X_{ij}} q^{X_{ij}}(1-q)^{t-X_{ij}}\right]^{1-\delta_{g_i g_j}} \enspace,
\end{align}
where $n \choose x$ is the binomial coefficient and $\delta_{ij}$ is the Kronecker delta.

\subsection{Posterior distribution}
Substituting all of the above equations into the right-hand side of Eq.~\eqref{eq:bayes} leads to the following unnormalized posterior density:
\begin{align}
    \label{eq:posterior}
    f(\bm{A}, \bm g,\bm p,q\mid\bm{X}) 
    \propto & \prod_{1\leq i<j \leq n} \left[{t \choose X_{ij}}  p_{A_{ij}}^{X_{ij}}\cdot (1-p_{A_{ij}})^{t-X_{ij}}\right]^{\delta_{g_i g_j}}\times\left[
    {t \choose X_{ij}} q^{X_{ij}}(1-q)^{t-X_{ij}}\right]^{1-\delta_{g_i g_j}} \times \mathds{1}_{p_{-1}\leq p_0\leq p_1}(\bm p) \enspace.
\end{align}
In principle, this distribution is enough to reconstruct the network, by marginalizing the parameters out, as described in Eq.~\eqref{eq:bayes:marg}, and calculate the posterior mean from it using Eq.~\eqref{eq:marginal_mean}. 
This posterior density does not correspond to a well-known family of distributions, meaning that inference is not straightforward either.
Hence, we turn to Markov Chain Monte Carlo (MCMC) techniques.

\section{Network Reconstruction Algorithm}
\label{sec:algorithm}

We define a Markov chain over networks $\bm{A}$, partitions $\bm g$ and parameters $\bm{p},q$ whose stationary distribution corresponds to the posterior density defined in Eq.~\eqref{eq:posterior}~\cite{hastings1970monte,robert2004monte_carlo,wang2022effective}. 
We implement a coordinate-wise Metropolis-Hastings (MH) algorithm that cycles through blocks of parameters in a deterministic order to generate random updates~\cite{Johnson_2013,robert2004monte_carlo}.
Sampling proceeds in two steps: a \emph{proposal step}, in which a new state is sampled for one of the variables, and an \emph{acceptance step}, in which the update is accepted with probability:
\begin{equation}
    \alpha = \mathrm{min}\left(1, R \right) \enspace,
\end{equation}
where
\begin{align*}
    R &= \frac{f(\bm{A}', \bm g', \bm p',q' \mid \bm{X})}{f(\bm{A}, \bm g, \bm p, q \mid \bm{X})} \frac{\kappa (\bm{A}', \bm g', \bm p',q \to \bm{A}, \bm g, \bm p, q)}{\kappa(\bm{A}, \bm g, \bm p,q \to \bm{A}', \bm g', \bm p',q')} \enspace,
\end{align*}
and $\kappa(\bm{A}, \bm g, \bm p,q  \to \bm{A}', \bm g', \bm p',q')$ is the probability of proposing a transition from the state $(\bm{A}, \bm g, \bm p,q)$ to a different state $(\bm{A}', \bm g', \bm p',q')$.
The ratio
\begin{equation}
    \frac{\kappa (\bm{A}', \bm g', \bm p',q \to \bm{A}, \bm g, \bm p, q)}{\kappa(\bm{A}, \bm g, \bm p,q \to \bm{A}', \bm g', \bm p',q')} \enspace,
\end{equation}
is a correction term and accounts for asymmetries in the proposal distribution. 
We use a symmetric proposal distribution whenever we can, so that the ratio of transition probabilities is $1$. This approach circumvents the necessity of calculating the correction term for the acceptance probability, reducing $R$ to a ratio of posterior densities only,
\begin{equation}
    R = \frac{f(\bm{A}', \bm g', \bm p',q' \mid \bm{X})}{f(\bm{A}, \bm g, \bm p,q \mid \bm{X})} \enspace.
\end{equation}

\subsection{Sampling algorithm}
\label{subsec:proposals}
For the network $\bm{A}$, we sample proposals by choosing a pair of nodes uniformly at random and switching the edge value to one of the remaining categories with equal probability.
This proposal is symmetric and the ratio of transition probabilities is thus $1$.
To calculate the ratio of posterior densities, we let $\bm{A}$ and $\bm{A}'$ be the current network and the proposed network, respectively, and $(i,j)$ be the index of the edge where these two networks differ.
We then have
\begin{align*}
    R =
    (1 - \delta_{g_i,g_j}) +
    \delta_{g_i,g_j}
    \frac{
        p_{A'_{ij}}^{X_{ij}}\cdot (1-p_{A'_{ij}})^{t-X_{ij}}
    }{
        p_{A_{ij}}^{X_{ij}}\cdot (1-p_{A_{ij}})^{t-X_{ij}}
    },
\end{align*}
meaning that a switch is always accepted when it involves two nodes in different groups, otherwise it is accepted with a probability that depends on the interaction parameters~$\bm p$.

For the intragroup interaction probabilities~$\bm p$, we again use a simple symmetric proposal distribution and perturb the current values with Gaussian noise.
More specifically, if we let $p_{\text{sign}}$ be the parameter we want to update, we define the random variable $v\sim N(0,\sigma_{\rm intra})$, where $\sigma_{\rm intra}$ is the standard deviation, set to a constant throughout the execution of the algorithm.
In our experiments, we fixed $\sigma_{\rm intra}=0.01$, as we found it provided good acceptance rates. 
The proposed new parameter is then $p'_{\text{sign}}=p_{\text{sign}}+v$. 
The proposal is symmetric as the Gaussian distribution is symmetric around zero, so the acceptance probability simplifies to, 
\begin{equation}
    \label{eq:intragroup_ratio}
    R =
    \mathds{1}_{p'_{-1}<p'_0<p'_1}(\bm p')
    \prod_{\substack{
        i<j\\
        g_i= g_j\\
        A_{ij}=\mathrm{sign}
    }}
    \frac{
        {p'}_{\mathrm{sign}}^{X_{ij}} (1-{p'}_\mathrm{sign})^{t-X_{ij}}
    }{
        p_{\mathrm{sign}}^{X_{ij}} (1-p_\mathrm{sign})^{t-X_{ij}}
    }\enspace.
\end{equation}

Note that although the proposal distribution has support over all real numbers, the posterior density at $\bm p'$ is only non-zero when the ordering constraint $0<p_{-1}<p_0<p_1<1$ is satisfied.
The indicator function $\mathds{1}_{p'_{-1}<p'_0<p'_1}(\bm p')$ in Eq.~\eqref{eq:intragroup_ratio} ensures that all proposals that would break this condition will be rejected in the acceptance step. 
The above product is taken over all pairs of nodes $(i,j)$ within the same group that are connected by an edge of the given sign.

Similarly, the intergroup interaction probability $q$ is updated with the same symmetrical proposal distribution, but with standard deviation $\sigma_{\rm inter}=0.1$, and the corresponding ratio is
\begin{equation*}
    R =
    \prod_{\substack{
        1\leq i< j \leq n\\
        g_i\neq g_j
    }}
    \frac{
        (q')^{X_{ij}} (1-q')^{t-X_{ij}}
    }{
        (q)^{X_{ij}} (1-q)^{t-X_{ij}}
    } \enspace.
\end{equation*}

Finally, we define updates to the node partition, $\bm g$. 
Here, we must relinquish symmetrical proposals because this would result in a vast number of low-probability steps.
Instead, we introduce a proposal distribution that leads to more probable transitions~\cite{peixoto2020merge}, by proposing a node to move to groups where it has a higher number of observed interactions. 
We construct the proposed partition $\bm g'$ by picking a node $i$ uniformly at random, and proposing to move it from its current group, $g_i$, to another one, $g'_i=k$.
We then let $\bm g'$ be the proposed partition in which node $i$ is now in group $g'_i$, and use the following proposal distribution
 \begin{align*}
     \kappa(\bm g \to \bm g') =
     &\begin{cases}
         \frac{1+\sum^n_{j\in N(g'_i)}  X_{ij}}{\sum_j^n X_{ij}+\gamma+1} \enspace, & \text{ if }1\leq k\leq \gamma\\
         \frac{1}{\sum_j^n X_{ij}+\gamma+1} \enspace, &\text{ if }k=\gamma+1\\
           0 \enspace, &\text{ else} \enspace,
     \end{cases}
 \end{align*}
where $\gamma$ is the number of non-empty groups in $\bm g$ and $N(k)$ denotes the nodes of group $k$.
This update 
creates groups with non-zero probability, and groups can be removed by moving the last node of a group to a different group. 
In this case we re-index groups from 1 to $\gamma$ to maintain a contiguous labeling.
The corresponding ratio is then
\begin{align*}
    R = 
    \prod_{i<j} \frac{
            \delta_{g'_i g'_j}     p_{A_{ij}}^{X_{ij}} (1\!-\!p_{A_{ij}})^{t\!-\!X_{ij}}
            \!+\!(1\!-\!\delta_{g'_i g'_j}) q^{X_{ij}}(1\!-\!q)^{t\!-\!X_{ij}}
    }{
             \delta_{g_i g_j}    p_{A_{ij}}^{X_{ij}} (1\!-\!p_{A_{ij}})^{t\!-\!X_{ij}}
            \!+\!(1\!-\!\delta_{g_i g_j}) q^{X_{ij}}(1\!-\!q)^{t\!-\!X_{ij}}
    }
    \times
    \frac{
        \kappa(\bm g'\to\bm g)
    }{
        \kappa(\bm g\to \bm g') 
    } \enspace.
\end{align*}

\subsection{Algorithmic implementation}
\label{sec:alg_ref}
To summarize the algorithm, we initialize all parameters and ensure that the constraint $p_{-1}<p_0<p_1$ is respected.
We then cycle through the parameters and update them sequentially, in the following order: (i)~the network $\bm{A}$, (ii)~each of the intra-group connection probabilities $p_{-1},p_0,p_1$,  (iii)~the inter-group connection probabilities $q$, and (iv)~the partition $\bm g$.
For each parameter, we generate a new proposal step and accept it with probability $\alpha=\min(1,R)$ where the specific ratio $R$ is given by the above equations.
The resulting algorithm provably generates a chain whose equilibrium distribution is given by Eq.~\eqref{eq:posterior} (see Appendix~\ref{appendix:convergence} for details).


\subsection{Augmenting the algorithm to multiple observations}
\label{subsec:multi_observation}
Before we present the results, we note that our algorithm can be easily modified to process multiple sets of observations on the same node set---for example, observations among a fixed set of individuals gathered over different time periods or in different contexts (work versus school, weekends versus weekdays, etc.).
In this context, we now have $r$ observation matrices $\bm{X}^{(1)},...\bm{X}^{(r)}$, and observation periods of different lengths $t^{(1)},...,t^{(r)}$, collectively denoted as $\{\bm{X}^{(s)}\}_{s=1}^r$, and $\{t^{(s)}\}_{s=1}^r$. 
We model the latent social network $\bm{A}$ as constant, but every other parameter is allowed to change between observation periods.
We form the likelihood of this updated model by treating each observation period as conditionally independent, i.e.,
\begin{equation}
    \mathbb P \left(\{\bm{X}^{(s)}\}_{s=1}^r \mid \bm{A},\{\bm g^{(s)}, \bm p^{(s)},q^{(s)},t^{(s)}\}_{s=1}^r\right) 
    =\prod_{s=1}^r \mathbb P\left(\bm{X}^{(s)}|\bm{A},\ \bm g^{(s)}, \bm p^{(s)},q^{(s)}, t^{(s)}\right) \enspace,
    \label{eq:likelihood_multiobservation}
\end{equation}
where we have made explicit the dependency on the length of the observation period, $t^{(s)}$.
Using the same line of reasoning as before, we obtain the posterior density: 
\begin{equation}
    f \Big(\bm{A},\{\bm g^{(s)}, \bm p^{(s)},q^{(s)} \}_{s=1}^r \mid  \{\bm{X}^{(s)}, t^{(s)}\}_{s=1}^r \Big) 
    \propto  \prod_{s=1}^r  \, \mathbb P \, \Big(\bm{X}^{(s)}|\bm{A},\ \bm g^{(s)}, \bm p^{(s)},q^{(s)}, t^{(s)}\Big)  
     \times   \mathds 1_{p_{-1}\leq p_0\leq p_1}\left(\bm p^{(s)}\right) \enspace.
    \label{eq:posterior_multiple_obs}
\end{equation}

Sampling the posterior in Eq.~\eqref{eq:posterior_multiple_obs} can be achieved by applying the same MCMC algorithm as before, just requiring that we cycle through the multiple observation periods.
The acceptance probabilities are straightforwardly calculated as products of the ratios $R$ derived in Section~\ref{subsec:proposals} above.

\section{Results}
\label{sec:results}
We evaluate our signed network reconstruction method, first on synthetic data where the true network is known to confirm that our method works as intended.
Then, we apply it to real-world data, in which the true network is unknown, and demonstrate its practical utility through indirect validation techniques. 

\subsection{Synthetic data experiments}
We benchmark our method against competing approaches and assess its robustness to different prior distributions. The scripts to replicate all results are available online.~\footnote{\url{github.com/ferenczid/signed_net_inf}.}

\subsubsection{Performance Evaluation}

To assess the performance of the algorithm, we generate synthetic data, by generating signed networks, interaction probabilities and network partitions.
We first generate an Erd\H{o}s-R\'enyi random graph with $n=64$ nodes and edge density $0.4$.
Then, we assign positive or negative signs to each edge with equal probability. We initialize the partitions by placing all nodes into either a single group (if target internal edge fraction is larger than $0.5$) or distinct individual groups (if target internal edge fraction is less than $0.5$). We then vary the internal edge fraction by making micro-adjustments---swapping one node to a new group at a time---and only keeping the swaps that move the total number of internal edges closer to our exact target, repeating this exact process until the target fraction is met.

The interaction probabilities, $p^+,p^0,p^-$ are drawn from uniform distributions $\text{Unif}(0.8,0.9)$, $\text{Unif}(0.2,0.3)$ and $\text{Unif}(0,0.1)$ respectively.
With these parameters in hand, we then generate a single interaction matrix $\bm{X}$ drawn from the likelihood in Eq.~\eqref{eq:likelihood}.
Our goal is to assess the performance of the reconstruction algorithm given the observation~$\bm{X}$.

We focus our assessment on the signed network $\bm{A}$, as it is the main target of inference for the algorithm.
We evaluate the reconstruction performance by computing the area under the ROC curve (AUC) for each class using a one-versus-rest approach, treating one class in turn as the positive class against all others~\cite{fawcett2006introduction}.
Often in multi-class classification problems, the performance is reported as a single number by combining the AUC scores across all classes in some manner~\cite{hand2001simple, domingos2000well}.
However, here we are interested in how well the algorithm can distinguish each edge type from the rest, and so we examine these AUC scores separately.

To add context to our method's performance, we also implement four simple baseline algorithms: one based on frequency thresholding, one based on the configuration model~\cite{bollobas1980probabilistic}, one based on ordinal regression~\cite{winship1984regression}, and one based on the stochastic block model~\cite{peixoto_mesoscale}.

For frequency thresholding, we use two thresholds on the raw interaction counts: pairs above the upper threshold are classified as positive, those below the lower threshold as negative, and those between as neutral. The AUC for each class is computed by varying the corresponding threshold.

The configuration model (CM) baseline uses relational information, as recorded in $\mathbf{X}$, to make a classification.
In the CM baseline, we first calculate an expected interaction matrix
\begin{equation}
    \langle \bm{X} \rangle=\frac{1}{2}\frac{\bm{d} \bm{d}^\top }{\bm{d}^\top \bm{d}} \enspace,
\end{equation}
where $\bm{d}$ is a length $n$ vector with entries $d_i=\sum_{j=1}^n \bm X_{ij}$ corresponding to the total number of interactions recorded for individual $i$.
The matrix $\langle \bm{X} \rangle$ can be viewed as encoding the expected number of interactions in a weighted configuration model for the network formed by interpreting the number of observed interactions between each pair of individuals as a weighted edge~\cite{newman2018network}.
We then calculate how the observed matrix $\bm{X}$ deviates from this expectation as $\hat{\bm{A}}^{\mathrm{(CM)}}=\bm{X}-\langle \bm{X} \rangle$, and interpret the resulting matrix as an estimate of a signed network,\footnote{We could also normalize the matrix, but it would not affect the classification outcome.} classifying $(i,j)$ as a positive edge if $\hat{\bm{A}}_{ij}>0$ and as a negative edge otherwise, while scoring neutral edges based on their absolute proximity to the expected baseline ($-\lvert \hat{\bm{A}}_{ij} \rvert$).

The ordered-probit baseline leverages interaction count data, but does not incorporate any network information.
In the ordinal-regression baseline, we randomly reveal a fraction $\omega\in[0,1]$ of the data to fit an ordered-probit model that predicts the sign $\hat{A}^{(\rm OR)}_{ij}\in\{-1,0,1\}$ from the counts $\bm{X}$.
More specifically, we let $A^{*}_{ij}$ be a normally distributed latent score with mean $\beta\, \bm X_{ij}$ and unit variance.
Then, we estimate
\begin{equation}
    \hat{A}^{(\rm OR)}_{ij} =
    \left\{
\begin{array}{r@{\qquad}l@{\,}c@{\,}r}
-1, &        & A^{*}_{ij} & \leq \tau_1,\\
0,  & \tau_1 < & A^{*}_{ij} & \leq \tau_2,\\
1,  & \tau_2 < & A^{*}_{ij} &,
\end{array}
\right.
\end{equation}
where $\beta,\tau_1,$ and $\tau_2$ are estimated by maximum likelihood.
This model can then be used to predict the sign of the interaction of the remaining pairs of nodes.
In our tests, we use a very generous $\omega=0.5$, exposing half of the actual signs to train the model.

Our last baseline is based on the edge-valued stochastic block model~(SBM) described in \cite{peixoto_mesoscale}, where we use the observation matrices as the edge-valued adjacency matrix and fit a stochastic block model. Since this is an unsigned method, we turn the inferred mesoscale structure into a sign prediction method by mapping intra-block homophily to positive ties. Specifically, to generate the continuous scores required for AUC evaluation, we rank node pairs using a two-tier hierarchy: pairs are sorted primarily by their block-membership status (prioritizing intra-block over inter-block for positive ties, and vice versa for negative ties), and secondarily by their observed interaction frequency $X_{ij}$.
\begin{figure*}
    \centering
    \includegraphics[width=1\textwidth]{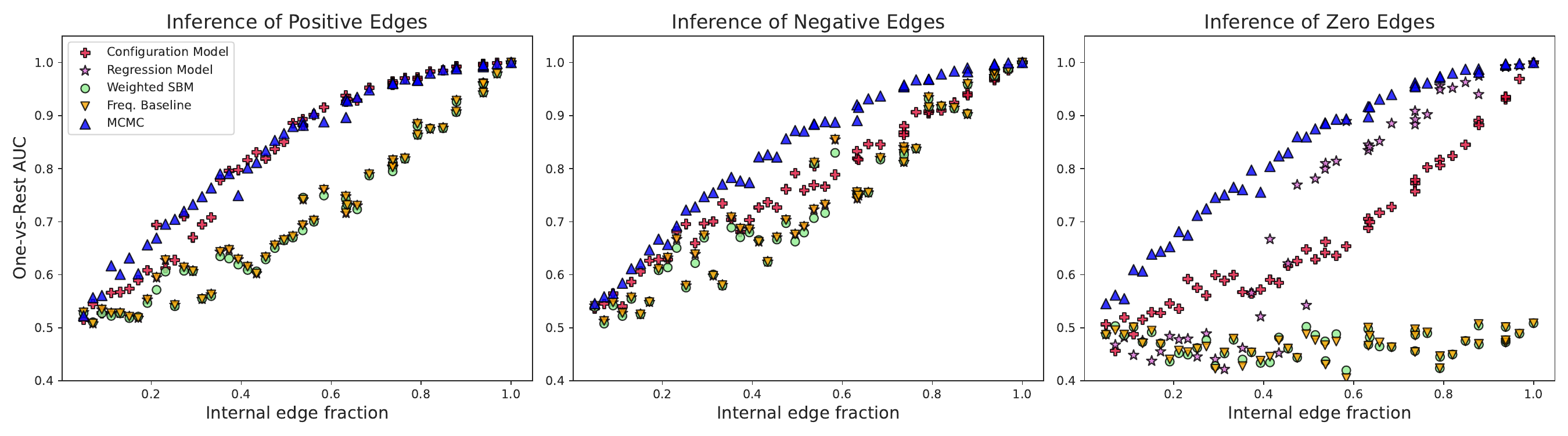}
    \caption{
        Reconstruction accuracy for our MCMC algorithm and four baselines on (\textbf{left}) positive, (\textbf{middle}) negative, and (\textbf{right}) neutral edge classification.
        We generate synthetic signed networks with 64 nodes and assign them to groups.
        We vary the number of groups in the network partition and record the fraction of edges within groups (internal edge fraction).
        For each network and partition, we generate an observation matrix $\bm{X}$.
        AUC scores are shown for one-versus-rest classification as a function of the internal edge fraction.
    }
    \label{fig:conf_mod}
\end{figure*}

Figure~\ref{fig:conf_mod} displays our results on this synthetic experiment.
We find that as we vary the fraction of edges within groups, both the negative and positive edges become easier to distinguish from others across all methods.
Our method (MCMC) unsurprisingly performs best, as it encodes the generative assumptions used to produce the synthetic data.

The frequency thresholding baseline performs similarly to the other methods for positive and negative edge inference but struggles to distinguish neutral edges from the other two types. The configuration model attains near-MCMC performance for positive edges at a much lower computational cost, though its performance is lower for negative and neutral edges.

The ordinal-regression baseline, despite access to half the true edge labels during training, achieves intermediate performance across all three categories. The SBM baseline performs comparably for positive edges but is among the weaker performers for negative edges and similarly struggles to distinguish neutral edges. Neutral edge inference is the most discriminating benchmark: MCMC maintains a clear advantage throughout, while the other baselines show substantially reduced performance, particularly frequency thresholding and SBM.

\subsubsection{Prior robustness}\label{sec:prior_experiments}
To assess the robustness of our results to the choice of prior, we do a parameter sweep and check the relative error between the reconstructed and ground-truth edges when using different priors. Figure~\ref{fig:prior_par_sweep} compares the error of our uniform prior against three cases: 
\begin{enumerate}
    \item the true prior used to generate the data;
    \item a ``sparse'' prior that assumes 5\% positive edges, 5\% negative edges and the rest absent, and
    \item a ``rare negatives'' prior in which only 1\% of edges are assumed to be negative and 20\% positive. 
\end{enumerate}

The plots show that our choice of uniform prior performs similarly or better in most cases. It only performs poorly when the prior forbids the formation of certain edge types, i.e., when the network has only one or two edge types, along the edges and corners of the simplex.

\begin{figure*}
    \centering
    \includegraphics[width=1\linewidth]{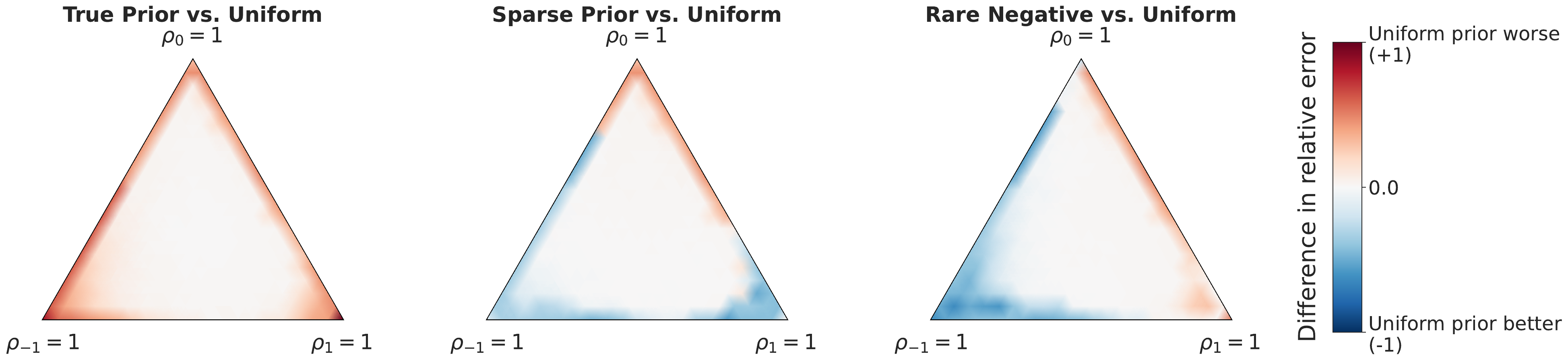}
    \caption{
    Robustness of the uniform prior for reconstructing signed edges. Each point of the equilateral triangle corresponds to a set of network parameters $(\rho_{-1}, \rho_0, \rho_1)$ used to generate interaction data. Color indicates the relative reconstruction error of the uniform prior $(1/3, 1/3, 1/3)$ compared against other choices of prior. \emph{Left:} True prior used to generate the data. \emph{Center:} Sparse prior $(0.05, 0.90, 0.05)$. \emph{Right:} Rare Negatives prior $(0.01, 0.79, 0.20)$.\label{fig:prior_par_sweep}}
\end{figure*}


\subsection{Real-world data}
Next, we demonstrate our method on an empirical human contact dataset to infer social ties from observed interactions. 

\subsubsection{Dataset}
In 2013, Mastrandrea et al.~\cite{Mastrandrea_Fournet_Barrat_2015} conducted a multi-level analysis of a French high school social network as part of the SocioPattern collaboration~\cite{cattuto2010dynamics}.
They collected multiple datasets over the span of five days (December 2-6, 2013), including contact data obtained with wearable sensors and contact diaries.

To gather the contact data, each student received a device that recorded with whom that student interacted throughout the week.
Data collection was carried out in 20-second time windows; that is, if two individuals faced each other and were within a 1--1.5 meter range, their devices registered the interaction for that time window~\cite{Mastrandrea_Fournet_Barrat_2015}.
The resulting data are encoded as a binary interaction tensor, where each slice represents an interaction matrix for a given time window.

Additional metadata was gathered, including but not limited to: self-reported friendships obtained via questionnaires and the school class membership of each student. Note that the class membership is distinct from the unobserved partition $\bm g$ that we infer with our model to indicate opportunity to interact freely at a given time.

We restrict our analysis to interactions during lunch breaks, when students are more likely to act according to their social preferences, thereby aligning the data more closely with the assumptions of our model and supporting its construct validity.
The timing of these breaks is not explicitly provided in the dataset, so we estimate them by tracking within-class interactions; when the number of cross-class interactions exceeds a certain threshold, we record the timestamp as the start of the break, and conversely, record the end of a break when that number drops below the threshold, as shown in Fig.~\ref{fig:real_results}(a).

\begin{figure*}[ht]
\includegraphics[width=1\columnwidth]{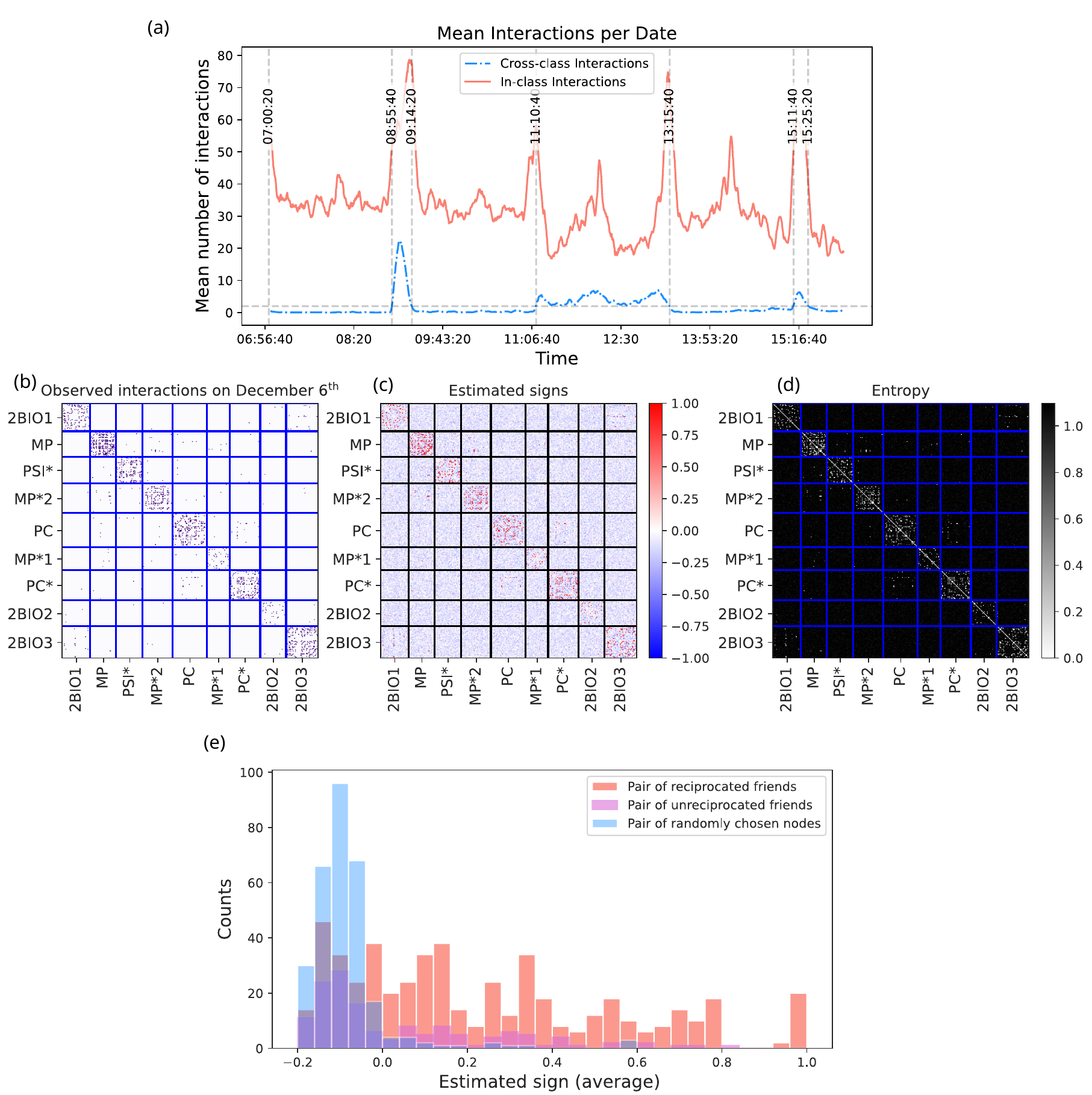}
      \caption{
      Signed network reconstruction from contact data among high school students. 
      (\textbf{a}) Mean number of interactions of students during the day averaged over the five days of the study. 
      In-class interactions are shown in orange, while cross-class interactions are shown in blue. 
      The vertical lines represent the start and end of the breaks, which are estimated from the rise of cross-class interactions.
      (\textbf{b}) Dichotomized matrix, in which pairs of students who interacted at all during the breaks are shown with a dark square.
      Classes are delimited with blue lines and denoted by specialization, e.g., ``PC'' for the ``Physics-Chemistry'' class.
      (\textbf{c}) Inference results, in which $\hat{A}_{ij}$ is the posterior mean of the sampled interaction signs, computed from 64 samples.
      (\textbf{d}) Entropy of the estimated distribution over signs for each pair of students.
      Pairs with high entropy, corresponding to low certainty, are shown in black.
      The pairs with the lowest entropy consist of students in the same class.
      (\textbf{e}) Using questionnaire data, we identify 262 reciprocated and 144 unreciprocated friendships.
      This histogram shows the average of the sample means that we infer across each of the five networks, one for each day of the study.
      A control group, comprised of 300 pairs of nodes selected uniformly at random, is shown in blue.}
  \label{fig:real_results}
\end{figure*}

We identified breaks at 8:55 and 15:10, both lasting 15 minutes, while lunch ran from 11:10 to 13:15.
After filtering the interactions that fell into these specific time windows, we construct our observation matrices $\bm{X}^{(1)},\ldots,\bm{X}^{(r)}$. 
To determine the optimal time window length for our observation period, we must balance two competing considerations: Windows that are too long may overlook crucial information as multiple social groups form and dissolve undetected, while windows that are too short lack sufficient signal for the fitted model to describe the data accurately. We settled on splitting the breaks into 4-minute periods, a data-driven choice designed to capture stable periods of social interaction without smoothing over rapid group reorganization (see Appendix \ref{app:temp_coex} for an empirical evaluation confirming the structural consistency of this window).

\subsubsection{Network inference}
Within each day, we assume the latent network remains constant and apply the multi-observation model described in Sec.~\ref{subsec:multi_observation} to the observation matrices obtained by breaking down the lunch breaks in slices of four minutes. To robustly infer the posterior distribution of the network partitions and interaction rates, we run an ensemble of 64 independent MCMC chains simultaneously. Comprehensive diagnostics confirming the convergence and stability of this ensemble are provided in Appendix \ref{app:mcmc_convergence}.

Figure~\ref{fig:real_results}~(b-d) gives an overview of our results for December 6\textsuperscript{th}, 2013, the last day of the study. 
Columns and rows in each matrix in panels~(b-d) correspond to students ordered by the class they attend throughout the day, e.g., ``PC'' for ``Physics--Chemistry'' or ``MP'' for ``Mathematics--Physics.'' 
Panel~(b) is a dichotomized version of the interaction data to show which students interacted at least once during the breaks that day. We see that most students interacted with their classmates even during breaks. 
Panel~(c) displays the inferred signed network, where each entry $\hat{A}_{ij}$ is the posterior mean (Eq.~\eqref{eq:marginal_mean}) of the sampled interaction signs between students $i$ and $j$, computed from the 64 posterior samples.
We see that positive edges are concentrated within classes, indicating that students tend to have positive relationships with their classmates.
Importantly, however, we do not infer strongly negative edges between classes even if the number of recorded interactions is close to 0 for every such pair of students.
Instead, the model acknowledges that data are lacking and maintains high posterior uncertainty about the students' relationship.
We illustrate this in panel~(d), where we plot the entropy of the marginal posterior probability $h(A_{ij})=f(A_{ij} \mid \{\bm{X}^{(s)}\}_{s=1}^r, \{t^{(s)}\}_{s=1}^r )$, i.e.,
\begin{equation}
    H_{ij} =\,\,\, -\!\!\!\!\!\!\!\!\sum_{A_{ij}\in \\ \{-1, 0, 1 \}} \!\!\!\!\!\! h(A_{ij}) \log h(A_{ij}) \enspace,
\end{equation}
for every pair of students.
The entropy is maximal ($-\log 3 \approx 1.1$ nats) outside of the diagonal, where students are in different classes and interactions are lacking, while it is low within classes, where we have more interactions.

Since we do not have access to the true social network, we validate our results using questionnaire data collected as part of the study~\cite{Mastrandrea_Fournet_Barrat_2015}.
We process the questionnaires to identify reciprocated friendships (where both students nominated each other as friends) and unreciprocated friendships (where only one student nominated the other).
We also sample 300 pairs of students uniformly at random to form a control group.
Our hypothesis is that reciprocated friendships would be associated with more positive edges on average, followed by unreciprocated friendships, and that the random pairs would be mostly indifferent or negative.
This is precisely what we see in panel~(e) of Fig.~\ref{fig:real_results}.
However, we also find that a minority of edges connecting friends have a sample mean similar to that of our control. 
This indicates that those friendship edges were not active during the observation period, i.e., they were not spending time in the same social group during the lunch breaks of the week.

Figure~\ref{fig:scatter_argmax} plots the posterior probability for each edge type against the observed interaction count $X_{ij}$. A natural decision boundary emerges near $X_{ij} \approx 6$: above it, the model assigns high probability to positive edges for nearly all pairs. Below this threshold, unlabeled pairs receive high zero-edge probability, while reciprocated and unreciprocated friend pairs retain elevated $P(A_{ij}=+1)$ even with sparse interaction evidence, showing that sign inference captures relational structure beyond interaction frequency alone. Table~\ref{tab:silent_friends_full} in the Appendix lists specific low-frequency friend pairs ($X_{ij} \leq 5$) inferred as positive with high confidence.

\begin{figure}[ht]
    \centering
    \includegraphics[width=1\linewidth]{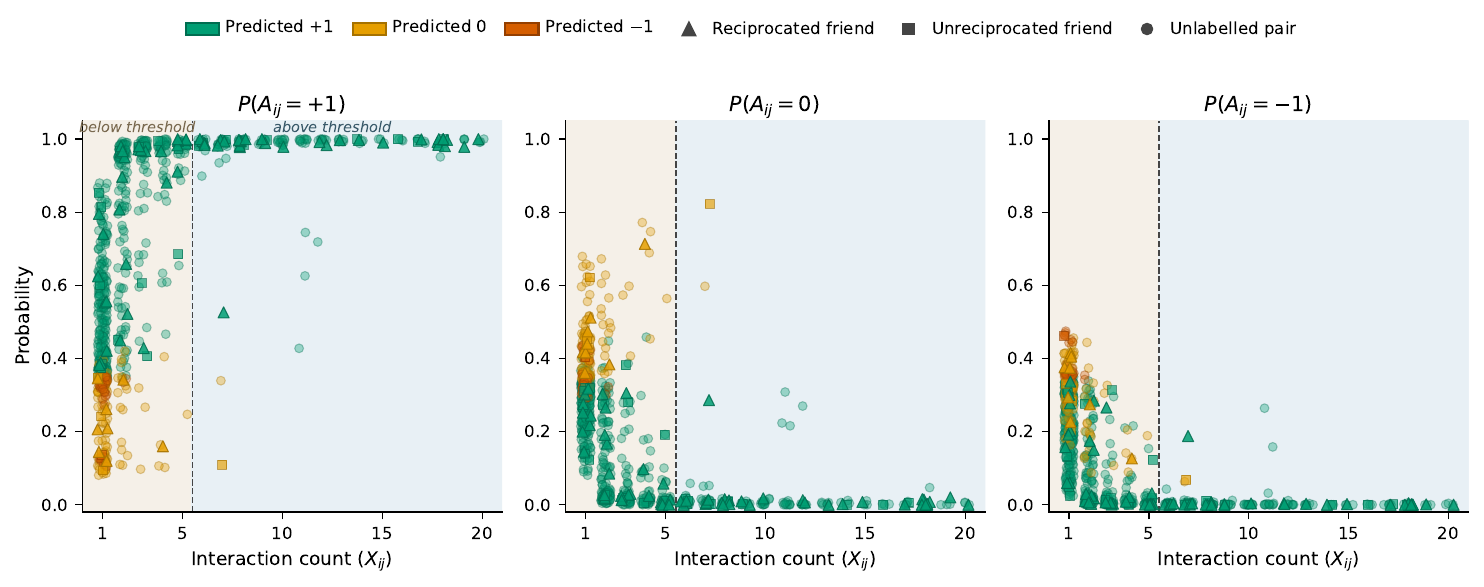}
    \caption{Posterior probability for positive (left), zero (middle), and negative (right) edges as a function of observed interaction count $X_{ij}$. Points are colored by argmax prediction and shaped by self-reported friendships (reciprocated friend, unreciprocated friend, unlabeled pair). A dashed vertical line marks the interaction-count threshold ($X_{ij} \approx 6$). Friend pairs show elevated $P(A_{ij}=+1)$ even below the threshold.}
    \label{fig:scatter_argmax}
\end{figure}

\subsubsection{Posterior predictive check}
\label{sec:posterior_predictive}
Although our model and method can give us an estimate of the latent social network for \emph{any} set of observation matrices $\bm{X}^{(1)},...\bm{X}^{(r)}$, there is no guarantee that the fit will be any good.
This could arise if, for example, the model is missing a critical component of the data-generating process.
Hence, we also complete our analysis with a posterior predictive analysis to validate the model's fit~\cite{gelman2013bayesian,young2021bayesian}.
We essentially compare the original data with new data simulated using the estimated network and parameters.
If they are dissimilar, then the model is a poor fit and needs to be revised.

\begin{figure}
    \includegraphics[width=1\columnwidth]{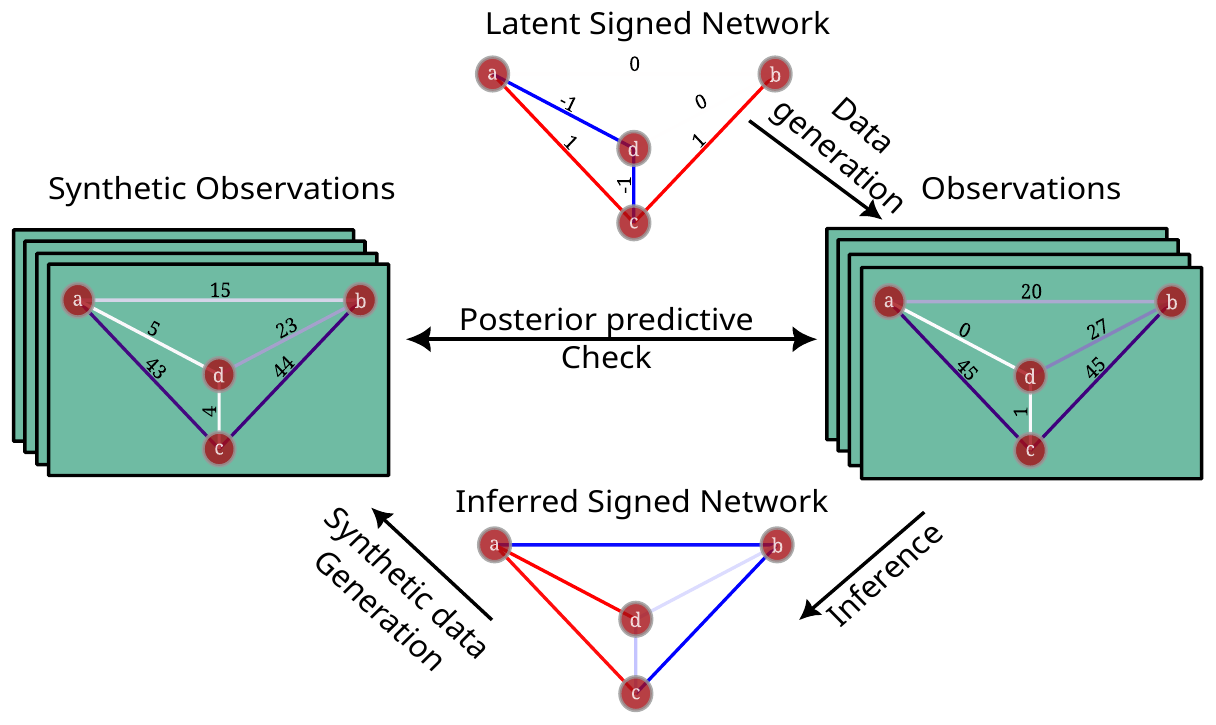}
    \caption{
        Posterior predictive analysis.
        The latent signed network (\textbf{top}) is never observed, but it determines the observations (\textbf{right}).
        Our method can be used to estimate a network using these observations alone (\textbf{bottom}).
        In posterior predictive analysis, we then generate a new set of synthetic observations using the generative model (\textbf{left}), and compare them to the actual observations (middle arrow).
        If they are dissimilar, then the model is a poor fit and needs to be revised.
    }
    \label{fig:post_pred_sol}
\end{figure}

Figure~\ref{fig:post_pred_sol} illustrates our posterior predictive check workflow. 
We let $\{\bm{X}^{(s)}\}_{s=1}^r$ denote the observation matrices for any given day of the study, where each $s$ corresponds to a 4-minute observation period.
The network $\bm{A}$ is assumed to be constant throughout the day. 
We obtain $64$ samples of the adjacency matrix from the posterior, each denoted by $\bm{A}_\ell$ for $\ell\in\{1,...,64\}$.
Since the groups and parameters can potentially differ for each observation period, we let $\bm p^{(s)}_{\ell}$ and $q^{(s)}_{\ell}$ denote the $\ell$\textsuperscript{th} sample of the parameters drawn for period $s$, and likewise for the partition, $\bm g^{(s)}_{\ell}$.

For each 4-minute period $s$, we then generate a new observation matrix $\widetilde{\bm{X}}^{(s)}_{\ell}$ using the likelihood in Eq.~\eqref{eq:likelihood_multiobservation}, thereby pairing the samples of the posterior distribution with synthetic observations.

Next, we compare these synthetic observations to the observed matrices, using two different statistics, the log-likelihood ratio discrepancy~\cite{gelman1996posterior}, and the total number of interactions per node.
For an observation matrix $\bm{X}$ and a set of parameters $(\bm{A},\bm g,\bm p,q)$, the log-likelihood discrepancy captures the distance between the expectation and the actual observations as 
\begin{equation*}
    D(\bm{X}; \bm{A}, \bm g, \bm p,q) = \sum_{1 \leq i< j \leq n} X_{ij} \log \frac{X_{ij}}{\mathbb{E}[ X_{ij}]} \enspace,
\end{equation*}
where $\mathbb{E}[X_{ij}]$ is the expected observation matrix when the unknown parameters equal $(\bm{A},\bm g, \bm p,q)$. 
A model cannot be rejected if the distribution of the discrepancy $D(\widetilde{\bm{X}}; \bm{A}, \bm g,\bm p,q)$ of synthetic data is similar to that of the observed data, $D(\bm{X}; \bm{A},\bm g,\bm p,q)$.
And this test can be summarized with a Bayesian $p$-value giving the probability that
\begin{equation}
    D(\bm{X}; \bm{A},\bm g,\bm p,q ) < D(\widetilde{\bm{X}}; \bm{A},\bm g,\bm p, q) \enspace.
\end{equation}

Using our paired sample described above, we evaluate this discrepancy for the synthetic observations and the actual observations for each time period.
Figure~\ref{fig:post_pred_comb} shows these discrepancies for one time period and $64$ posterior samples, with each dot corresponding to one posterior sample.
The distribution of the discrepancy is similar for the synthetic and observed data, meaning that we have not obtained any evidence suggesting a lack of fit.
The corresponding $p$-value of $0.25$ confirms this observation.

\begin{figure}
    \includegraphics[width=1\columnwidth]{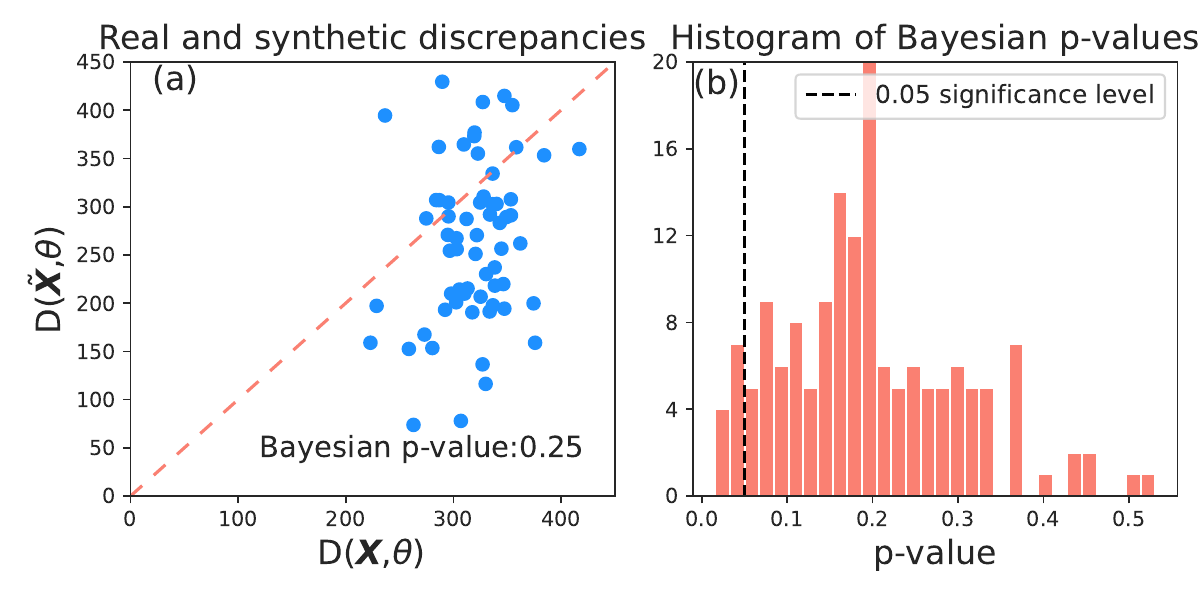}

	\includegraphics[width=1\linewidth]{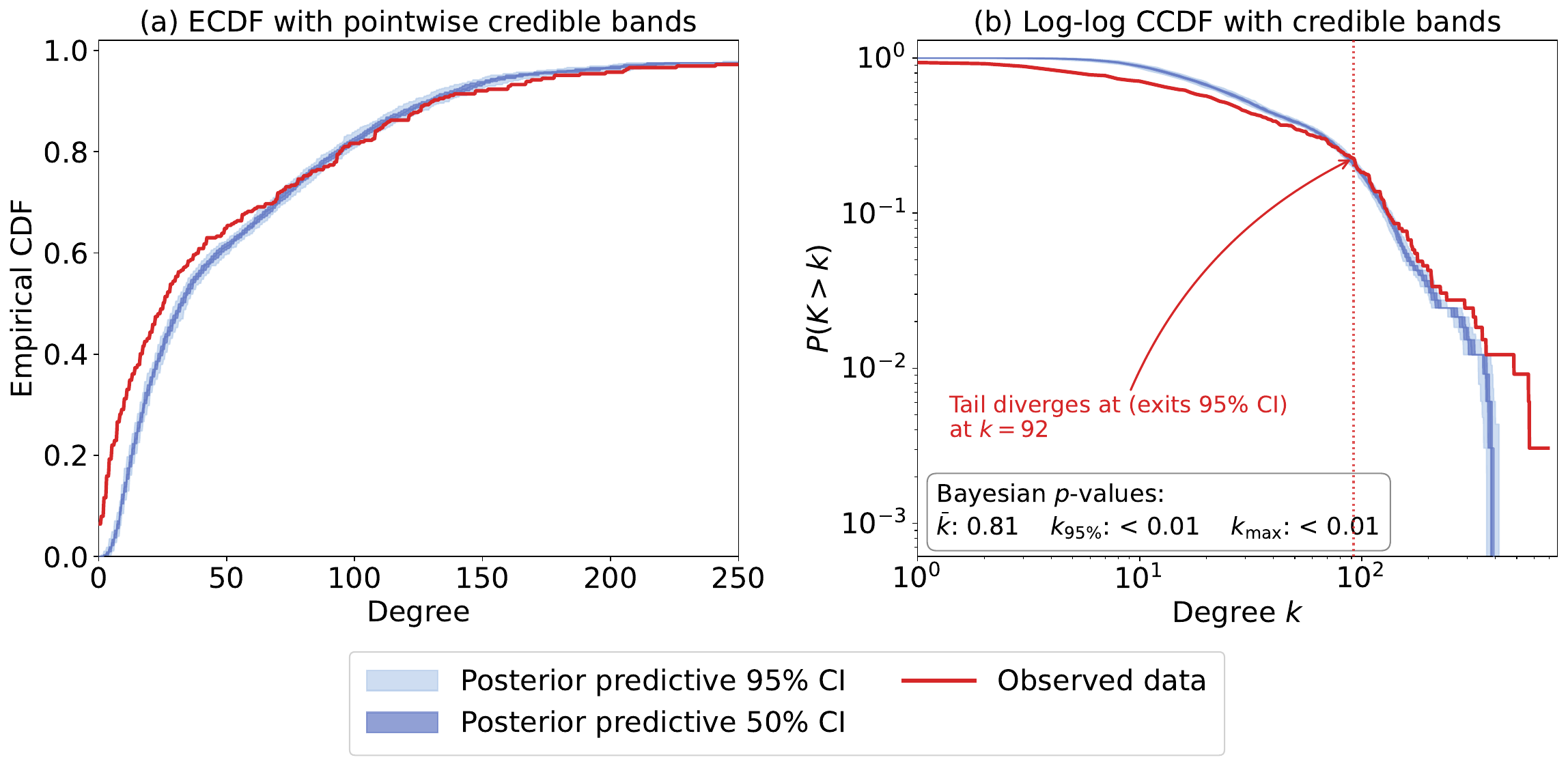}
	\caption{\textbf{Top:}
		(a) Example of the posterior predictive for one observation period.
		The diagonal line indicates equal values of the discrepancies.
		This test yields a Bayesian $p$-value of $0.25$, corresponding to the fraction of nodes above the dotted line.
		(b) Histogram of Bayesian p-values for the high school dataset.
		The dotted line represents the (arbitrary) $0.05$ significance level.
		\textbf{Bottom:} Posterior predictive check for the degree distribution showing the ECDF (a) and log-log CCDF (b) with 50\% and 95\% credible bands. The mean degree is well captured ($p = 0.81$), but the 95th-percentile and maximum degree fall outside the credible interval ($p < 0.01$), indicating that the model underproduces high-degree hub nodes.}
	\label{fig:post_pred_comb}
\end{figure}

This result replicates across nearly all observation periods, as shown in Fig.~\ref{fig:post_pred_comb}.
Of the 156 time periods we fit, we find that only 11 yield a $p$-value under the typical  $0.05$ significance level. 
Note that we should not necessarily expect a flat distribution of $p$-values, as the datasets share a common parameter, the latent network $\bm{A}$, and are thus not independent.
The values below $0.05$ (an arbitrary threshold) do not necessarily indicate poor performance in our method. Instead, these values could suggest that something occurred during these specific observation periods, such as the rapid formation and dissolution of groups, or some other external factor, leading to poorer fit.

Since the log-likelihood discrepancy is dominated by high-frequency pairs, we complement it with a degree-distribution check. Figure~\ref{fig:post_pred_comb} shows the posterior predictive ECDF and log-log CCDF of node degree with 50\% and 95\% credible bands. The mean degree is well reproduced ($p = 0.81$), but the 95th-percentile and maximum degree fall well outside the posterior predictive interval ($p < 0.01$), revealing that the model underproduces high-degree hub nodes. This points to a degree-heterogeneous model extension as a natural direction for future work.

\section{Conclusion}
Here we have developed a probabilistic generative model and a corresponding Bayesian inference algorithm for inferring signed networks from contact pattern data.
Our method provides a way to reconstruct signed networks where direct observations of relationships are unavailable, by leveraging indirect observations in the form of interaction counts.
Specifically, we have focused on distinguishing between the absence of interactions due to negative relationships and the absence of interactions due to indifference or lack of opportunity to interact.
We validated the effectiveness of our method on synthetic data, and evaluated its applicability and robustness on a real-world dataset of face-to-face interactions~\cite{Mastrandrea_Fournet_Barrat_2015}. 
Our comparison with metadata on self-reported friendships provided evidence that our inferred networks aligned well with the students' own perceptions of their social ties.
Finally, we demonstrated the goodness-of-fit of our generative model through posterior predictive checks.

Our work opens several avenues for future research. 
We emphasize that, although we have focused on a specialized case study, our model is highly customizable and could be easily adapted to a much broader set of problems.
For example, our generative model could naturally be extended to bipartite graphs, which would be useful for analyzing a wide range of datasets, from ecological networks to recommendation systems.

Additionally, we could refine the data model itself, tailoring our method to better capture different collection processes related to social network analysis, e.g., using survey data or observing online interactions in social media. Furthermore, the conditional independence assumption could be relaxed by incorporating a copula structure~\cite{shang_percolation}, such as the Farlie--Gumbel--Morgenstern copula, to model dependence between observations more accurately.
Another promising direction would be to examine unobserved or partially observed nodes in the network, by incorporating more sophisticated network priors.

\section*{Acknowledgments}

This work was supported in part by the National Science Foundation award SES-241973 (JGY), by the National Institutes of Health 1P20 GM125498-01 Centers of Biomedical Research Excellence Award (JGY), the Dutch Research Council (NWO) Talent Programme ENW-Vidi 2021 under grant number VI.Vidi.213.163 (LP) and (DF).
This research was made possible, in part, using the Data Science Research Infrastructure (DSRI) hosted at Maastricht University. This work used the Dutch national e-infrastructure with the support of the SURF Cooperative using grant no. EINF-15302.
\bibliography{bibliography}

@article{NEAL202280,
title = {Inferring signed networks from preschoolers' observed parallel and social play},
journal = {Soc. Netw.},
volume = {71},
pages = {80-86},
year = {2022},
issn = {0378-8733},
doi = {10.1016/j.socnet.2022.07.002},
author = {Jennifer Watling Neal and Zachary P. Neal and C. Emily Durbin},
}

@article{newman2018network,
  title={Network structure from rich but noisy data},
  author={Newman, M. E. J.},
  journal={Nat. Phys.},
  volume={14},
  number={6},
  pages={542--545},
  year={2018},
  doi={10.1038/s41567-018-0076-1},
  publisher={Nature Publishing Group UK London}
}

@inproceedings{guha2004propagation,
  title={Propagation of trust and distrust},
  author={Guha, Ramanthan and Kumar, Ravi and Raghavan, Prabhakar and Tomkins, Andrew},
  booktitle={Proceedings of the 13th international conference on World Wide Web},
  pages={403--412},
  year={2004}
}

@article{guimera2009missing,
  title={Missing and spurious interactions and the reconstruction of complex networks},
  author={Guimer{\`a}, Roger and Sales-Pardo, Marta},
  journal={Proc. Natl. Acad. Sci. U.S.A.},
  volume={106},
  number={52},
  pages={22073--22078},
  year={2009},
  doi={10.1073/pnas.0908366106},
  publisher={National Academy of Sciences}
}

@book{whitehead2008analyzing,
  author    = {Whitehead, Hal},
  title     = {Analyzing Animal Societies: Quantitative Methods for Vertebrate Social Analysis},
  publisher = {University of Chicago Press},
  address   = {Chicago, IL},
  year      = {2008},
  isbn      = {9780226895239},
  url       = {https://press.uchicago.edu/ucp/books/book/chicago/A/bo5607202.html}
}

@article{peixoto2018reconstructing,
  title={Reconstructing networks with unknown and heterogeneous errors},
  author={Peixoto, Tiago P.},
  journal={Phys. Rev. X},
  volume={8},
  number={4},
  pages={041011},
  year={2018},
  doi={10.1103/PhysRevX.8.041011},
  publisher={APS}
}

@article{peixoto2019network,
  title={Network reconstruction and community detection from dynamics},
  author={Peixoto, Tiago P},
  journal={Physical review letters},
  volume={123},
  number={12},
  pages={128301},
  year={2019},
  publisher={APS}
}

@article{holland1983stochastic,
  title={Stochastic blockmodels: First steps},
  author={Holland, Paul W and Laskey, Kathryn Blackmond and Leinhardt, Samuel},
  journal={Social networks},
  volume={5},
  number={2},
  pages={109--137},
  year={1983},
  publisher={Elsevier}
}

@article{nowicki2001estimation,
  title={Estimation and prediction for stochastic blockstructures},
  author={Nowicki, Krzysztof and Snijders, Tom A B},
  journal={Journal of the American statistical association},
  volume={96},
  number={455},
  pages={1077--1087},
  year={2001},
  publisher={Taylor \& Francis}
}

@article{fawcett2006introduction,
  title={An introduction to {ROC} analysis},
  author={Fawcett, Tom},
  journal={Pattern recognition letters},
  volume={27},
  number={8},
  pages={861--874},
  year={2006},
  publisher={Elsevier}
}

@article{braunstein2019network,
  title={Network reconstruction from infection cascades},
  author={Braunstein, Alfredo and Ingrosso, Alessandro and Muntoni, Anna Paola},
  journal={Journal of the Royal Society Interface},
  volume={16},
  number={151},
  pages={20180844},
  year={2019},
  publisher={The Royal Society}
}

@article{cattuto2010dynamics,
  title={Dynamics of person-to-person interactions from distributed {RFID} sensor networks},
  author={Cattuto, Ciro and Van den Broeck, Wouter and Barrat, Alain and others},
  journal={{PLOS ONE}},
  volume={5},
  number={7},
  pages={e11596},
  doi={10.1371/journal.pone.0011596},
  year={2010},
  publisher={Public Library of Science San Francisco, USA}
}

@Article{Andres2023,
author={Andres, Georges
and Casiraghi, Giona
and Vaccario, Giacomo
and Schweitzer, Frank},
title={Reconstructing signed relations from interaction data},
journal={Sci. Rep.},
year={2023},
month={Nov},
day={24},
volume={13},
number={1},
pages={20689},
abstract={Positive and negative relations play an essential role in human behavior and shape the communities we live in. Despite their importance, data about signed relations is rare and commonly gathered through surveys. Interaction data is more abundant, for instance, in the form of proximity or communication data. So far, though, it could not be utilized to detect signed relations. In this paper, we show how the underlying signed relations can be extracted with such data. Employing a statistical network approach, we construct networks of signed relations in five communities. We then show that these relations correspond to the ones reported by the individuals themselves. Additionally, using inferred relations, we study the homophily of individuals with respect to gender, religious beliefs, and financial backgrounds. Finally, we study group cohesion in the analyzed communities by evaluating triad statistics in the reconstructed signed network.},
doi={10.1038/s41598-023-47822-1}
}

@article{redhead2023reliable,
  title={Reliable network inference from unreliable data: {A} tutorial on latent network modeling using {STRAND}},
  author={Redhead, Daniel and McElreath, Richard and Ross, Cody T.},
  journal={Psychol. Methods},
  year={2023},
  publisher={American Psychological Association},
  volume={29},
  pages={1100--1122},
  doi={10.1037/met0000519}
}

@article{de2023latent,
  title={Latent network models to account for noisy, multiply reported social network data},
  author={De Bacco, Caterina and Contisciani, Martina and Cardoso-Silva, Jonathan and others},
  journal={J. R. Stat. Soc., A: Stat. Soc.},
  volume={186},
  number={3},
  pages={355--375},
  year={2023},
  doi={10.1093/jrsssa/qnac004},
  publisher={Oxford University Press US}
}

@inproceedings{jacobs2021measurement,
  author    = {Jacobs, Abigail Z. and Wallach, Hanna},
  title     = {Measurement and Fairness},
  booktitle = {Proc.\ ACM Conference on Fairness, Accountability, and Transparency (FAccT '21)},
  address   = {Canada},
  month     = {Mar},
  year      = {2021},
  pages     = {375--385},
  doi       = {10.1145/3442188.3445901}
}

@article{butts2009revisiting,
  title={Revisiting the foundations of network analysis},
  author={Butts, Carter T.},
  journal={Science},
  volume={325},
  number={5939},
  pages={414--416},
  year={2009},
  doi={10.1126/science.1171022},
  publisher={American Association for the Advancement of Science}
}

@article{eagle2006reality,
  title={Reality mining: sensing complex social systems},
  author={Eagle, Nathan and Pentland, Alex},
  journal={Personal and ubiquitous computing},
  volume={10},
  number={4},
  pages={255--268},
  year={2006},
  publisher={Springer}
}

@article{harary1953notion,
  title={On the notion of balance of a signed graph},
  author={Harary, Frank},
  journal={Mich. Math. J.},
  volume={2},
  number={2},
  pages={143--146},
  year={1953},
  doi={10.1307/mmj/1028989917},
  publisher={University of Michigan, Department of Mathematics}
}

@article{harrigan2020negative,
  title={Negative ties and signed graphs research: Stimulating research on dissociative forces in social networks},
  author={Harrigan, Nicholas M and Labianca, Giuseppe Joe and Agneessens, Filip},
  journal={Social Networks},
  volume={60},
  pages={1--10},
  year={2020},
  publisher={Elsevier}
}

@article{he2019information,
  title={Information diffusion in signed networks},
  author={He, Xiaochen and Du, Haifeng and Feldman, Marcus W and Li, Guangyu},
  journal={PloS one},
  volume={14},
  number={10},
  pages={e0224177},
  year={2019},
  publisher={Public Library of Science San Francisco, CA USA}
}

@article{hiller2017friends,
  title={Friends and enemies: A model of signed network formation},
  author={Hiller, Timo},
  journal={Theoretical Economics},
  volume={12},
  number={3},
  pages={1057--1087},
  year={2017},
  publisher={Wiley Online Library}
}

@inproceedings{leskovec2010signed,
  title={Signed networks in social media},
  author={Leskovec, Jure and Huttenlocher, Daniel and Kleinberg, Jon},
  booktitle={Proceedings of the SIGCHI conference on human factors in computing systems},
  pages={1361--1370},
  year={2010}
}

@article{liu2019influence,
  title={Influence maximization on signed networks under independent cascade model},
  author={Liu, Wei and Chen, Xin and Jeon, Byeungwoo and Chen, Ling and Chen, Bolun},
  journal={Applied Intelligence},
  volume={49},
  number={3},
  pages={912--928},
  year={2019},
  publisher={Springer Nature BV}
}

@article{bollobas1980probabilistic,
  title={A probabilistic proof of an asymptotic formula for the number of labelled regular graphs},
  author={Bollob{\'a}s, B{\'e}la},
  journal={European Journal of Combinatorics},
  volume={1},
  number={4},
  pages={311--316},
  year={1980},
  publisher={Elsevier}
}

@article{hand2001simple,
  title={A simple generalisation of the area under the ROC curve for multiple class classification problems},
  author={Hand, David J and Till, Robert J},
  journal={Machine learning},
  volume={45},
  number={2},
  pages={171--186},
  year={2001},
  publisher={Springer}
}

@article{domingos2000well,
  title={Well-trained PETs: Improving probability estimation trees},
  author={Domingos, Pedro and Provost, Foster},
  journal={CDER WorkingPaper, Stern School of Business. New York, NY: New York University},
  year={2000}
}

@article{winship1984regression,
  title={Regression models with ordinal variables},
  author={Winship, Christopher and Mare, Robert D},
  journal={American sociological review},
  pages={512--525},
  year={1984},
  publisher={JSTOR}
}

@article{Mastrandrea_Fournet_Barrat_2015, 
    title={Contact Patterns in a High School: A Comparison between Data Collected Using Wearable Sensors, Contact Diaries and Friendship Surveys}, 
    volume={10},
    DOI={10.1371/journal.pone.0136497},
    number={9},
    journal={PLOS ONE},
    publisher={Public Library of Science (PLoS)},
    author={Mastrandrea, Rossana and Fournet, Julie and Barrat, Alain},
    year={2015}, 
    pages={e0136497}
}

@article{tang2016survey,
  title={A survey of signed network mining in social media},
  author={Tang, Jiliang and Chang, Yi and Aggarwal, Charu and Liu, Huan},
  journal={Acm computing surveys (csur)},
  volume={49},
  number={3},
  pages={1--37},
  year={2016},
  publisher={ACM New York, NY, USA}
}

@inproceedings{tang2016recommendations,
  title={Recommendations in signed social networks},
  author={Tang, Jiliang and Aggarwal, Charu and Liu, Huan},
  booktitle={Proceedings of the 25th international conference on world wide web},
  pages={31--40},
  year={2016}
}

@inproceedings{polarized_communities,
  author    = {Bonchi, Francesco and Galimberti, Edoardo and Gionis, Aristides
               and Ordozgoiti, Bruno and Ruffo, Giancarlo},
  title     = {Discovering Polarized Communities in Signed Networks},
  booktitle = {Proc.\ 28th ACM Int.\ Conf.\ on Information and Knowledge Management (CIKM ’19)},
  address   = {Beijing, China},
  year      = {2019},
  pages     = {961--970},
  publisher = {Association for Computing Machinery},
  doi       = {10.1145/3357384.3357977}
}

@article{peel2022statistical,
  title={Statistical inference links data and theory in network science},
  author={Peel, Leto and Peixoto, Tiago P. and De Domenico, Manlio},
  journal={Nat. Commun.},
  volume={13},
  number={1},
  pages={6794},
  year={2022},
  publisher={Nature Publishing Group UK London},
  doi={10.1038/s41467-022-34267-9}
}

@article{gelardi2019detecting,
  title={Detecting social (in)stability in primates from their temporal co-presence network},
  author={Gelardi, Valeria and Fagot, Jo{\"e}l and Barrat, Alain and Claidi{\`e}re, Nicolas},
  journal={Anim. Behav.},
  volume={157},
  pages={239--254},
  year={2019},
  doi={10.1016/j.anbehav.2019.09.011},
  publisher={Elsevier}
}

@article{peixoto2014efficient,
  title={Efficient Monte Carlo and greedy heuristic for the inference of stochastic block models},
  author={Peixoto, Tiago P.},
  journal={Physical Review E},
  volume={89},
  number={1},
  pages={012804},
  year={2014},
  publisher={APS}
}

@article{peixoto2020merge,
  title={Merge-split Markov chain Monte Carlo for community detection},
  doi={10.1103/PhysRevE.102.012305},
  author={Peixoto, Tiago P.},
  journal={Phys. Rev. E},
  volume={102},
  number={1},
  pages={012305},
  year={2020},
  publisher={APS}
}

@article{hastings1970monte,
  title={Monte {Carlo} sampling methods using {Markov chains} and their applications},
  author={Hastings, W. K.},
  year={1970},
  journal={Biometrika},
  volume={57},
  pages={97--109},
  doi={10.1093/biomet/57.1.97},
  publisher={Oxford University Press}
}

@book{robert2004monte_carlo,
  author       = {Robert, Christian P. and Casella, George},
  title        = {Monte Carlo Statistical Methods},
  series       = {Springer Texts in Statistics},
  edition      = {2},
  publisher    = {Springer},
  address      = {New York, NY},
  year         = {2004},
  isbn         = {978-1-4419-1939-7},
  doi          = {10.1007/978-1-4757-4145-2}
}

@article{wang2022effective,
  title={An introduction to the {Markov} chain {Monte Carlo} method},
  author={Wang, Wenlong},
  journal={Am. J. Phys.},
  volume={90},
  number={12},
  pages={921--934},
  year={2022},
  doi={10.1119/5.0122488},
  publisher={AIP Publishing}
}

@inproceedings{10.1145/1772690.1772756,
author = {Leskovec, Jure and Huttenlocher, Daniel and Kleinberg, Jon},
title = {Predicting positive and negative links in online social networks},
year = {2010},
isbn = {9781605587998},
publisher = {Association for Computing Machinery},
address = {New York, NY, USA},
url = {https://doi.org/10.1145/1772690.1772756},
doi = {10.1145/1772690.1772756},
booktitle = {Proceedings of the 19th International Conference on World Wide Web},
pages = {641–650},
numpages = {10},
keywords = {trust, structural balance, status theory, signed networks, positive edges, negative edges, distrust},
location = {Raleigh, North Carolina, USA},
series = {WWW '10}
}

@article{young2021bayesian,
author = {Young, Jean-Gabriel and Cantwell, George T. and Newman, M. E. J.},
year = {2021},
month = {03},
pages = {cnaa046},
title = {Bayesian inference of network structure from unreliable data},
volume = {8},
journal = {J. Complex Netw.},
doi = {10.1093/comnet/cnaa046}
}

@article{Johnson_2013,
   title={Component-Wise {Markov chain Monte Carlo}: {U}niform and Geometric Ergodicity under Mixing and Composition},
   volume={28},
   pages={360--375},
   doi={10.1214/13-STS423},
   number={3},
   journal={Stat. Sci.},
   author={Johnson, Alicia A. and Jones, Galin L. and Neath, Ronald C.},
   year={2013},
   month=aug }

@book{gelman2013bayesian,
  author    = {Gelman, Andrew and Carlin, John B. and Stern, Hal S. and Dunson, David B.
               and Vehtari, Aki and Rubin, Donald B.},
  title     = {Bayesian Data Analysis},
  edition   = {3},
  series    = {CRC Texts in Statistical Science},
  publisher = {CRC},
  address   = {Boca Raton, FL},
  year      = {2013},
  isbn      = {9781439840955},
  doi       = {10.1017/CBO9780511815478},
  url       = {https://stat.columbia.edu/~gelman/book/}
}

@article{gelman1996posterior,
  title={Posterior predictive assessment of model fitness via realized discrepancies},
  author={Gelman, Andrew and Meng, Xiao-Li and Stern, Hal},
  journal={Stat. Sin.},
  pages={733--760},
  year={1996},
  publisher={JSTOR},
  url={https://www.jstor.org/stable/24306036}
}

@Article{shang2020,
author={Shang, Yilun},
title={On the Structural Balance Dynamics Under Perceived Sentiment},
journal={Bulletin of the Iranian Mathematical Society},
year={2020},
month={Jun},
day={01},
volume={46},
number={3},
pages={717-724},
issn={1735-8515},
doi={10.1007/s41980-019-00286-4}
}

@article{peixoto_mesoscale,
  title = {Nonparametric weighted stochastic block models},
  author = {Peixoto, Tiago P.},
  journal = {Phys. Rev. E},
  volume = {97},
  issue = {1},
  pages = {012306},
  numpages = {17},
  year = {2018},
  month = {Jan},
  publisher = {American Physical Society},
  doi = {10.1103/PhysRevE.97.012306},
  url = {https://link.aps.org/doi/10.1103/PhysRevE.97.012306}
}

@article{shang_percolation,
  title = {Feature-enriched core percolation in multiplex networks},
  author = {Shang, Yilun},
  journal = {Phys. Rev. E},
  volume = {106},
  issue = {5},
  pages = {054314},
  numpages = {8},
  year = {2022},
  month = {Nov},
  publisher = {American Physical Society},
  doi = {10.1103/PhysRevE.106.054314},
  url = {https://link.aps.org/doi/10.1103/PhysRevE.106.054314}
}

@article{Vehtari,
author = {Aki Vehtari and Andrew Gelman and Daniel Simpson and Bob Carpenter and Paul-Christian B{\"u}rkner},
title = {{Rank-Normalization, Folding, and Localization: An Improved $\widehat{R}$ for Assessing Convergence of MCMC (with Discussion)}},
volume = {16},
journal = {Bayesian Analysis},
number = {2},
publisher = {International Society for Bayesian Analysis},
pages = {667 -- 718},
year = {2021},
doi = {10.1214/20-BA1221},
URL = {https://doi.org/10.1214/20-BA1221}
}

\null\vspace{2\baselineskip}
\appendix
\section{Proof of Convergence for the Markov-Chain}
\label{appendix:convergence}

In the main text, we have constructed the transition probabilities with the Metropolis-Hastings method and, as a result, the conditions for detailed balance are satisfied, and the stationary distribution of the resulting Markov chain exists~\cite{robert2004monte_carlo,hastings1970monte}.

Now, we need to verify the uniqueness condition of the stationary distribution. 
Reference~\cite{robert2004monte_carlo} demonstrates that the ergodicity of the Markov chain guarantees this condition.
Hence, we prove the following theorem:

\begin{theorem}
    The Markov chain described in Sec.~\ref{sec:algorithm} is ergodic.
\end{theorem}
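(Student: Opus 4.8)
The plan is to verify ergodicity in the usual MCMC sense by establishing two properties of the Markov chain on the state space $\Omega = \{\bm A\}\times\{\bm g\}\times\{\bm p : 0<p_{-1}<p_0<p_1<1\}\times(0,1)_q$: (i) \emph{irreducibility} --- from any state, the chain can reach (a neighborhood of) any other state with positive probability in finitely many steps; and (ii) \emph{aperiodicity}. Together with the detailed-balance property already established in the first paragraph of the appendix, these give a unique stationary distribution equal to the posterior of Eq.~\eqref{eq:posterior}. Since the continuous parameters $\bm p,q$ live in a subset of Euclidean space, "irreducibility" should be stated properly as $\psi$-irreducibility with respect to the reference measure (Lebesgue on the $\bm p,q$ block times counting measure on the discrete blocks), but the argument is the same: every open set of positive posterior mass is reachable.

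I would carry this out block by block, in the deterministic sweep order used by the algorithm. First, the network block: the proposal picks a uniformly random pair $(i,j)$ and switches $A_{ij}$ to either of the other two values with probability $1/2$ each, and the acceptance ratio $R$ derived in Sec.~\ref{subsec:proposals} is always strictly positive (it is either $1$ when $g_i\neq g_j$, or a ratio of strictly positive binomial factors since $0<p_{-1}<p_0<p_1<1$). Hence any target network $\bm A'$ is reachable from any $\bm A$ in at most $\binom{n}{2}$ accepted single-edge moves, each with positive probability. Second, the partition block: the proposal $\kappa(\bm g\to\bm g')$ assigns positive probability to moving a chosen node into any existing non-empty group \emph{and} to creating a new singleton group (the $k=\gamma+1$ case), and groups vanish when their last node leaves; the acceptance ratio is again a ratio of strictly positive sums of positive terms. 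A standard connectivity argument then shows any partition is reachable from any other --- e.g.\ route through the all-singletons partition by peeling off nodes one at a time, then build up the target. Third, the continuous blocks $\bm p$ and $q$: the Gaussian random-walk proposals have a density that is positive on all of $\mathbb R$, so in one step there is positive probability of landing in any open ball; restricted to the region where the indicator $\mathds 1_{p_{-1}<p_0<p_1}$ is $1$ the posterior density is continuous and strictly positive, so the acceptance probability is bounded below on compact subsets, giving positive transition probability to every open subset of the admissible $(\bm p,q)$ region. Concatenating one full sweep, any open neighborhood of any state is reachable from any starting state with positive probability, which is $\psi$-irreducibility. For aperiodicity, it suffices to exhibit one state with a positive holding probability: for instance, from a state that is already a mode, an edge-switch proposal or a partition proposal can be rejected with positive probability (the acceptance ratio $R<1$ for a strictly worse neighbor), so the chain has a positive probability of remaining in place in a single sweep, ruling out periodicity.

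The main obstacle --- and the step I would spend the most care on --- is the treatment of the continuous parameter blocks together with the hard ordering constraint $0<p_{-1}<p_0<p_1<1$. One has to be careful that reachability is claimed only for the admissible region (the proposal can leave it, but such proposals are rejected by the indicator in Eq.~\eqref{eq:intragroup_ratio}), and that "irreducibility" is phrased in the measure-theoretic ($\psi$-irreducibility) sense rather than the finite-state sense, so that the conclusion "ergodic $\Rightarrow$ unique stationary distribution" still applies. A second, more minor, subtlety is that the sweep is deterministic-scan rather than random-scan Metropolis--Hastings: detailed balance holds for each individual block kernel, and the composition is still invariant for the posterior, but one should note that the \emph{composed} kernel need not itself be reversible --- this does not affect ergodicity or uniqueness of the stationary law, only the symmetry of the transition operator, so the argument goes through unchanged.
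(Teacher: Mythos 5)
Your proposal is correct and follows essentially the same route as the paper: the paper's proof likewise establishes ergodicity by exhibiting, for any two states of positive posterior probability, a finite sequence of network-edge and partition moves (plus the Gaussian/uniform proposals for $\bm p,q$) that is executed with positive probability. In fact your write-up is more careful than the paper's brief sketch, which does not explicitly address aperiodicity, the $\psi$-irreducibility framing needed for the continuous $(\bm p,q)$ block, or the ordering constraint $0<p_{-1}<p_0<p_1<1$; these additions strengthen rather than alter the argument.
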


\begin{proof}

Let $(\bm{A}^x,p^x_{-1},p^x_0,p^x_1,q^x,g^x)$, and $(\bm{A}^y,p^y_{-1},p^y_0,p^y_1,q^y,g^y)$ be two arbitrary states of the Markov chain with positive probabilities. 
We need to demonstrate that there exists a set of moves that, with positive probability, transition the state from state $x$ to state $y$. 

For the interaction probabilities, this condition is easily satisfied due to the properties of the normal and the uniform distribution.

For the network and partitions, let $\mu_{\bm A_{x}^y}$, and $\mu_{g_{x}^y}$ be an enumeration of steps that get $\bm A_x$ and $g_x$ to $\bm A_y$ and $g_y$ respectively.

Let $\mu$ be the concatenation of $\mu_{\bm A_{x}^y}$, and $\mu_{g_{x}^y}$. Then the following holds:

\begin{gather*}
    \mathbb P(\text{ getting from } \bm{A}^x,g^x \text{ to } \bm{A}^y,g^y)\\
    \geq \mathbb P(\text{ we make the transitions listed in } \mu)>0,
\end{gather*}
which concludes our proof.
\end{proof}

\section{Structural Consistency of Partition Model}\label{app:temp_coex}
  To ensure our 4-minute observation windows capture stable social configurations rather than transient noise, we empirically evaluate the network's temporal persistence. Rather than relying on heuristic global clustering metrics or imposing arbitrary confidence thresholds, we work directly with the full posterior co-assignment matrix to evaluate structural consistency. For a given observation period $s$, we define the $n \times n$ matrix $C^{(s)}$ where each entry represents the posterior probability that nodes $i$ and $j$ are co-assigned to the same group: 
	$$C^{(s)}_{ij} = \mathbb{P}(g^{(s)}_i = g^{(s)}_j \mid \bm{X}^{(s)})$$
	This formulation yields a probabilistic summary of the inferred group structure that retains full posterior uncertainty, rather than prematurely collapsing it into a binary decision. To measure temporal stability between intervals separated by a temporal gap of $\tau$ steps (i.e., $s$ and $s+\tau$), we compute the normalized Frobenius inner product (cosine similarity) between their respective co-assignment matrices. To ensure that trivial self-assignments ($i=j$) do not artificially inflate the baseline similarity, we restrict the summation strictly to the upper triangle of the matrices, thereby isolating true interpersonal relationships:
	\begin{equation*}
		\text{sim}(s, s+\tau) = \frac{\sum_{i < j} C^{(s)}_{ij} C^{(s+\tau)}_{ij}}{\sqrt{\sum_{i < j} \left(C^{(s)}_{ij}\right)^2} \sqrt{\sum_{i < j} \left(C^{(s+\tau)}_{ij}\right)^2}}\end{equation*}
	Because this metric evaluates the two matrices as vectors, it naturally weights all pairs by their co-assignment probability at both intervals simultaneously. Pairs that are confidently co-assigned at both $s$ and $s+\tau$ contribute significantly to the similarity score, while inactive nodes with diffuse, near-uniform probabilities contribute negligible signal automatically, without the need to handle them as a special case. To contextualize whether the group structure is genuinely persistent relative to the timescale of the data, we evaluate this metric across multiple temporal gaps $\tau \in \{1, 2, 3, 4, 5\}$. As shown in Fig.~\ref{fig:temporal_class_persistence}, the inferred partitions are highly persistent over short timescales, with structural similarity demonstrating a clear, monotonic decay as the temporal gap $\tau$ grows.
\begin{figure}[htbp]
	\centering
	\includegraphics[width=0.8\columnwidth]{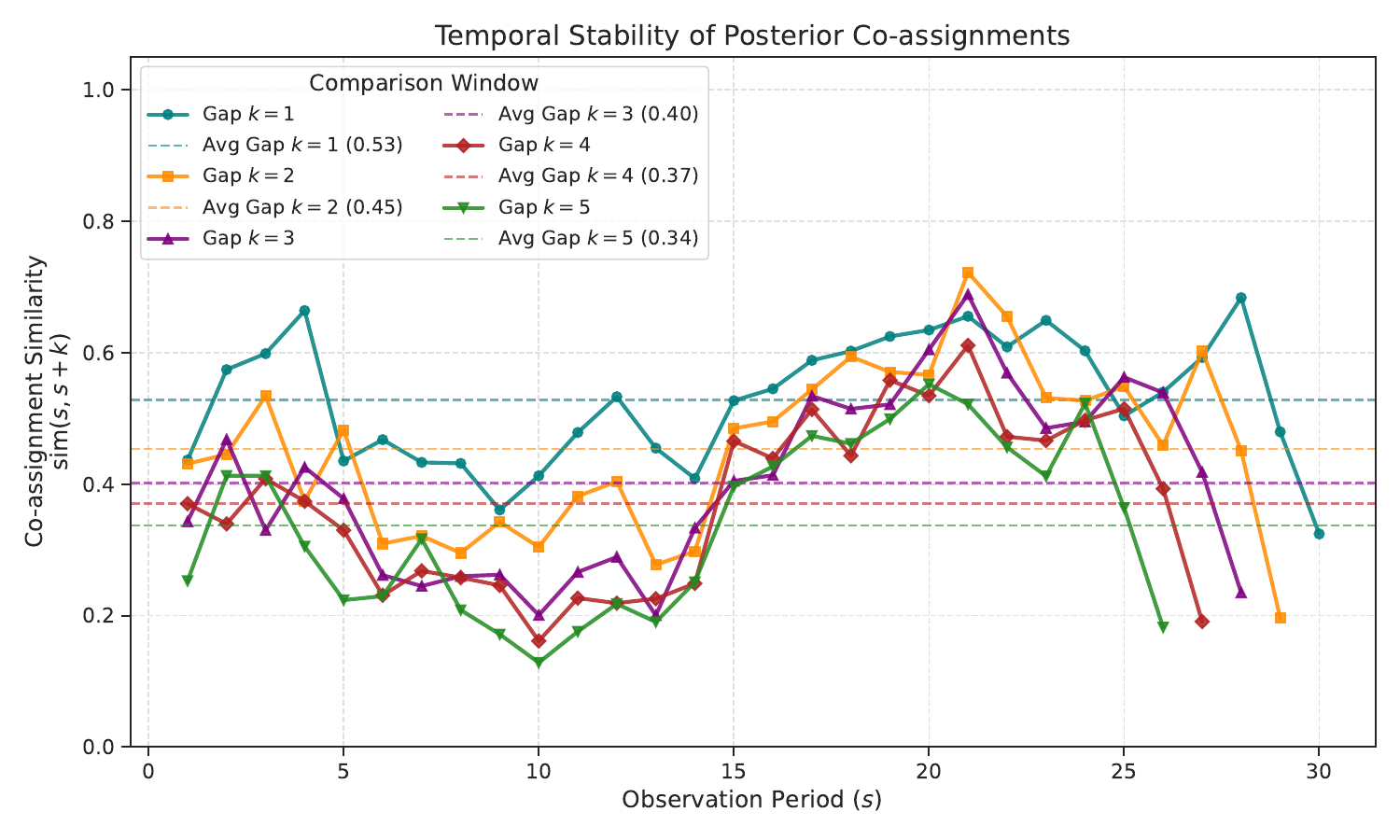}
	\caption{Temporal persistence of confident opportunity groups. The conditional posterior persistence of co-assignments is high on average, implying that confidently inferred social groups remain highly stable throughout the break. The expected persistence drops heavily towards the end of the observation period, capturing the abrupt dissolution of these social groups as students are forced to return to class and reorganize according to their official schedules. }
	\label{fig:temporal_class_persistence}
\end{figure}
\section{MCMC Convergence Diagnostics and Ensemble Stability} \label{app:mcmc_convergence}

Evaluating MCMC convergence in discrete network partition models requires distinguishing between within-chain stationarity and between-chain mixing. Because the space of possible partitions is combinatorial, chains naturally lock into different, closely situated local optima (modes) \cite{peixoto2014efficient}. Consequently, standard convergence metrics that penalize between-chain variance (such as the Gelman-Rubin $\hat{R}$) must be interpreted alongside metrics that evaluate ensemble-level stability and precision \cite{Vehtari}.

\subsection{Initialization and Stationarity}
To ensure a robust characterization of the posterior distribution, we deployed an ensemble of 64 independent MCMC chains. We enforced a strict separation between two distinct phases of the sampling process:
\begin{enumerate}
	\item \textbf{Initialization:} The initial phase after which discrete partition inference begins and chains locate high-probability regions.
	\item \textbf{Statistical Sampling:} A secondary burn-in period (discarding the first 150 iterations) to ensure the interaction rate parameters ($p_{+1}, p_{0}, p_{-1}$) have reached local stationarity before sample collection.
\end{enumerate}

We confirm the success of this two-phase approach visually in Figure \ref{fig:combined_convergence}. The linear scale illustrates the diverse initialization of the 64 chains, while the logarithmic scale demonstrates that all chains achieved stationarity well before the sampling threshold.
\subsection{Convergence Metrics and Precision}
\begin{figure*}[htbp]
	\centering
	\begin{subfigure}[b]{0.49\textwidth}
		\centering
		\includegraphics[width=\textwidth]{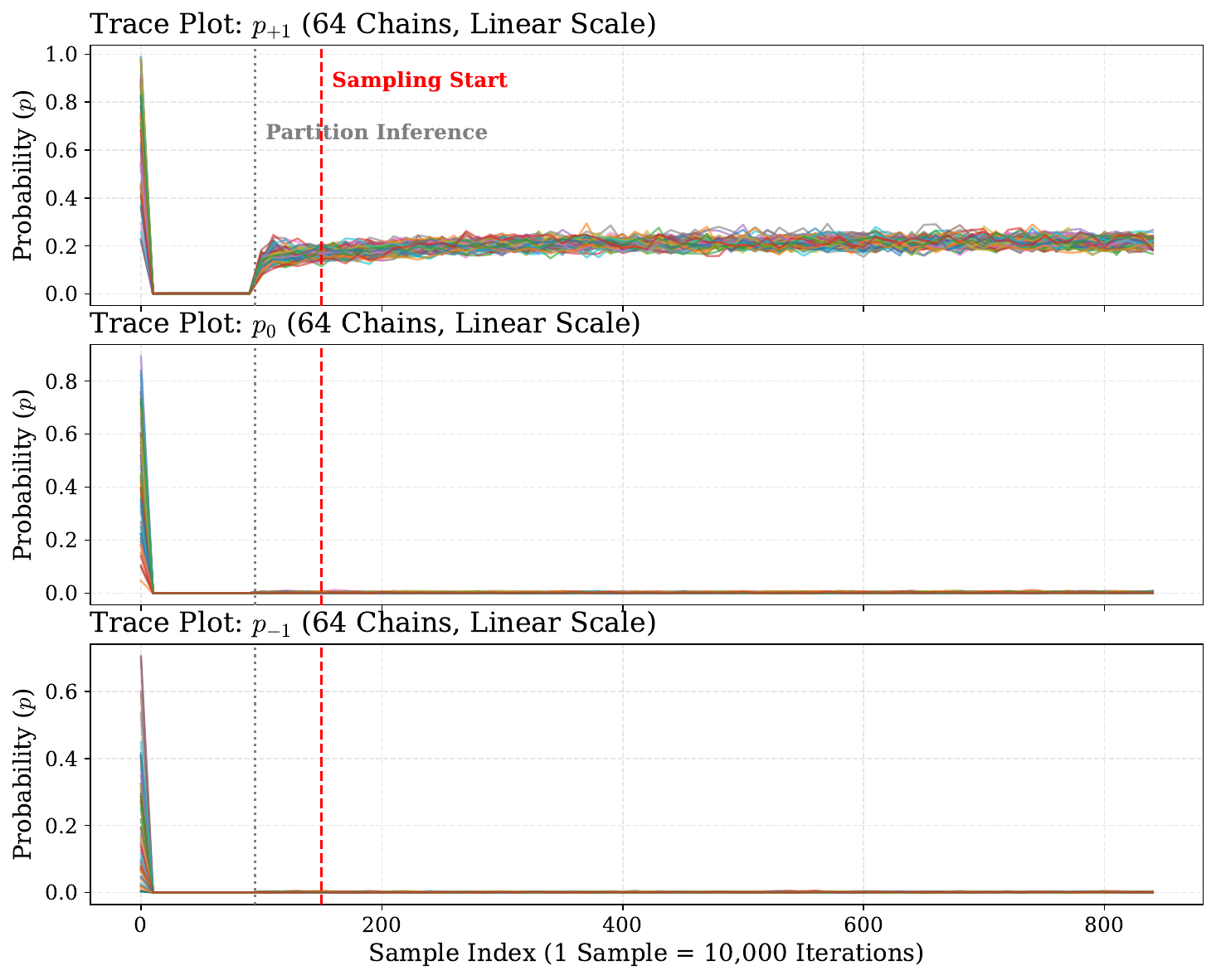}
		\caption{Linear scale: Highlights initialization diversity.}
		\label{fig:trace_linear}
	\end{subfigure}
	\hfill
	\begin{subfigure}[b]{0.49\textwidth}
		\centering
		\includegraphics[width=\textwidth]{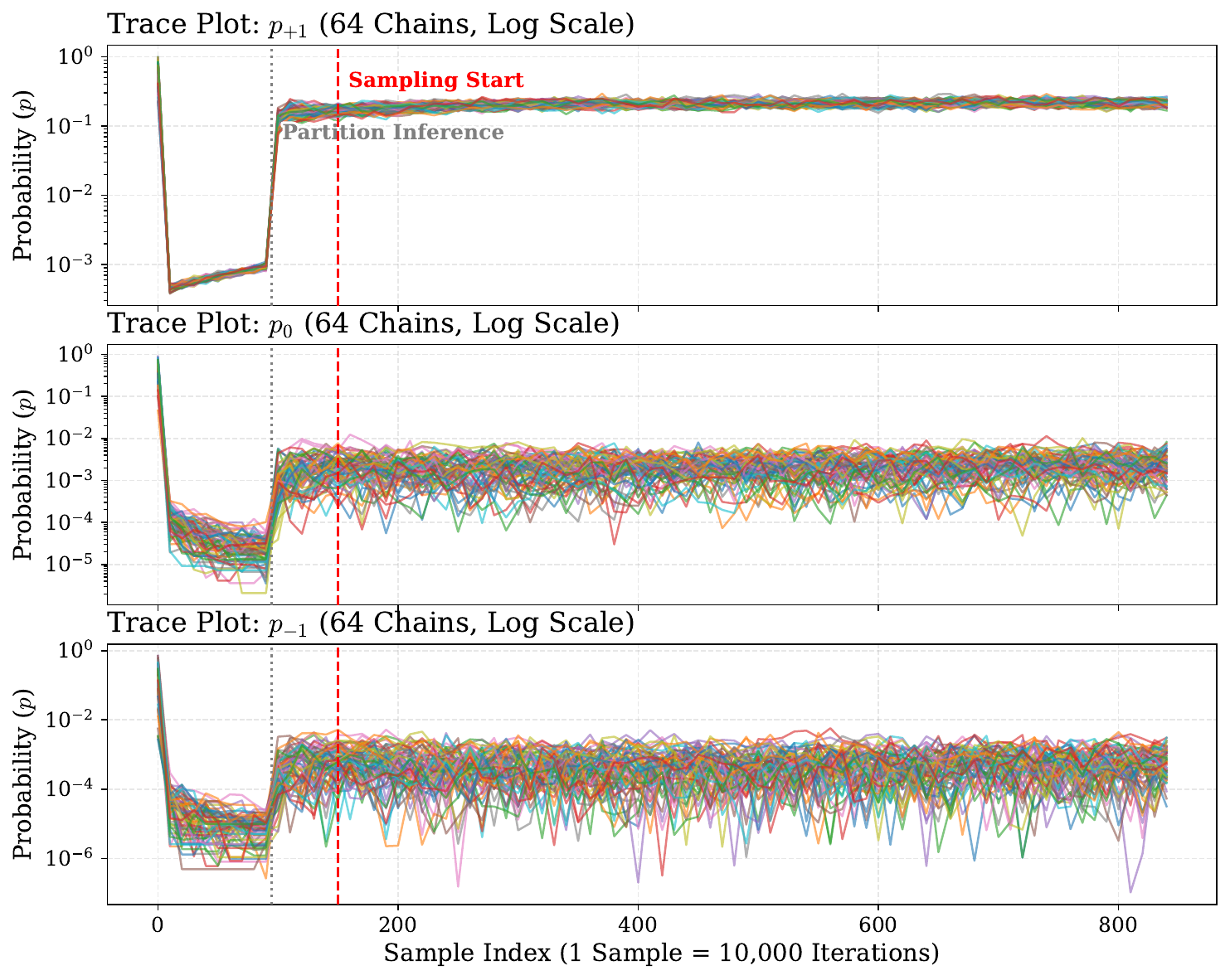}
		\caption{Logarithmic scale: Preserves sampling dynamics.}
		\label{fig:trace_log}
	\end{subfigure}
	\caption{MCMC trace plots for the interaction rate parameters $p_{+1}, p_{0}, p_{-1}$. The linear scale (left) demonstrates the diverse initialization of the 64 chains, while the logarithmic scale (right) illustrates the stable stationarity of the ensemble across several orders of magnitude.}
	\label{fig:combined_convergence}
\end{figure*}
As established above, the rugged, multimodal nature of the network partition space causes standard between-chain metrics like the Gelman-Rubin statistic ($\hat{R}$) to remain elevated ($\approx 1.2$), as individual chains stabilize in spatially distinct local optima. However, as shown in Table \ref{tab:mcmc_metrics}, our within-chain diagnostics confirm the validity and precision of our results despite this spatial separation:
\begin{itemize}
	\item \textbf{Precision:} The Monte Carlo Standard Error (MCSE) is negligible (e.g., $< 0.002$ for $p_{+1}$), proving that the mean estimates are stationary and the sampling error is minimal.
	\item \textbf{Sufficiency:} The pooled Tail Effective Sample Size ($ESS_{tail}$) for all parameters comfortably exceeds 500, surpassing the standard threshold of 400 required to robustly define 95\% Highest Density Intervals (HDIs).
\end{itemize}

\begin{table}[htbp]
	\centering
	\caption{MCMC Convergence Diagnostics for the 64-chain ensemble (post burn-in). The elevated $\hat{R}$ reflects expected structural multimodality, while low MCSE and high $ESS_{tail}$ confirm precise, sufficient sampling.}
	\label{tab:mcmc_metrics}
	\begin{tabular}{@{}lccc@{}}
		\toprule
		\textbf{Parameter} & \textbf{$\hat{R}$} & \textbf{MCSE} & \textbf{Tail ESS} \\ \midrule
		$p_{+1}$ & 1.230 & 0.0016 & 536 \\
		$p_{0}$  & 1.306 & 0.0001 & 482 \\
		$p_{-1}$ & 1.146 & 0.00004 & 562 \\ \bottomrule
	\end{tabular}
\end{table}

\subsection{Ensemble Stability via Split-Half Comparison}
While the previous metrics confirm the quality of individual chains, the primary goal of our ensemble design is to comprehensively map the entire multimodal network space. Therefore, the critical convergence question is whether the global ensemble average itself is stable and reproducible.

To demonstrate this, we conducted a split-half stability test by randomly dividing the 64 independent chains into two distinct groups of 32 chains. Figure \ref{fig:split_half_histogram} shows the overlapping density histograms for these two independent macro-ensembles. The near-perfect overlap demonstrates that the 64-chain ensemble reproducibly maps the exact same posterior distribution. This confirms that the inferred parameters are not artifacts of specific, sticky chain trajectories, but are highly stable, globally representative properties of the model.

\begin{figure}[htbp]
	\centering

	\includegraphics[width=0.9\textwidth]{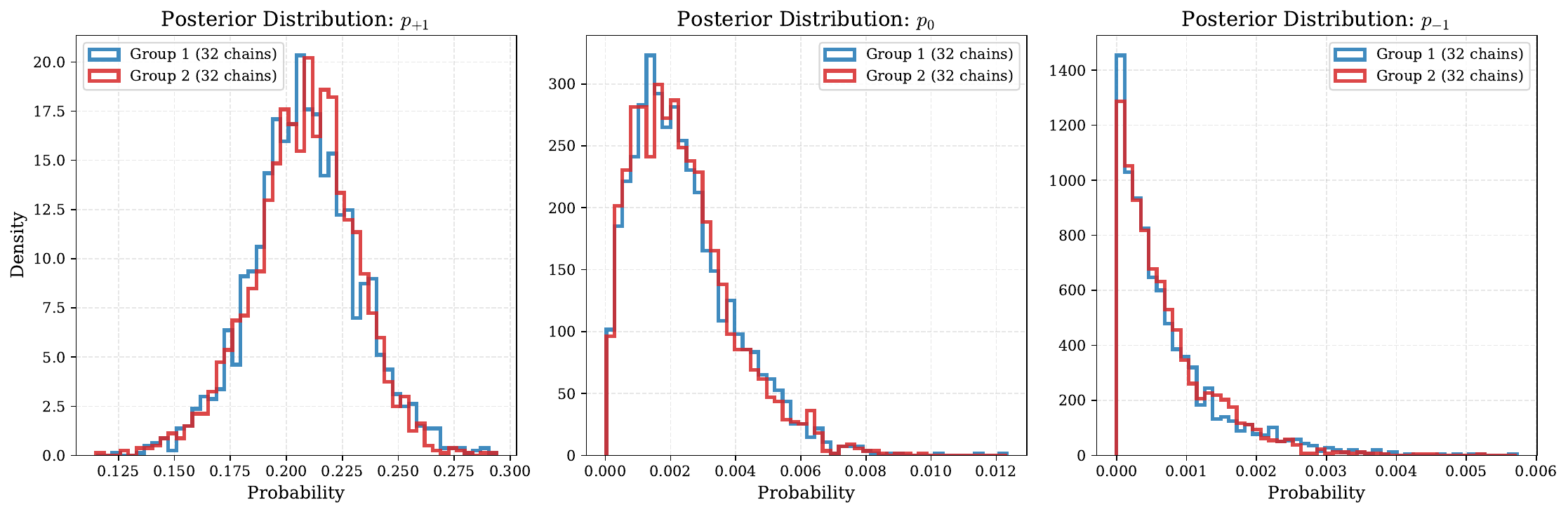}
	\caption{Split-half ensemble stability test. The 64 MCMC chains were randomly divided into two independent groups of 32. The overlapping posterior density histograms confirm that both groups independently map the same multimodal posterior distribution, proving the ensemble is highly stable and reproducible.}
	\label{fig:split_half_histogram}
\end{figure}
	\section{Low Contact Friends}
A strength of our modeling approach is the recovery of social structure from sparse interaction data. Table \ref{tab:silent_friends_full} highlights these 'silent ties': pairs of students who reported a mutually reciprocated friendship in the survey, but for whom the physical sensors recorded minimal interaction ($X_{ij} \leq 5$) during the observation period. Despite the sparsity of physical contact, the model correctly assigns a high posterior probability to a positive tie ($A_{ij} = +1$) by leveraging the global community structure and the interaction patterns of the surrounding network.
\begin{table}[htbp]
	\centering
	\caption{Complete list of high certainty ($\ge0.5$) inferred positive ties for sparse interaction pairs ($X \leq 5$). All pairs are mutually reciprocated friends in the survey.}
	\label{tab:silent_friends_full}
	\begin{tabular}{@{}ccc@{}}
		\toprule
		\textbf{Student IDs} & \textbf{Contacts ($X_{ij}$)} & \textbf{$P(A_{ij} = +1)$} \\ \midrule
		3 \& 407   & 5 & 0.9879 \\
		70 \& 240  & 5 & 0.9952 \\
		70 \& 425  & 2 & 0.8970 \\
		79 \& 335  & 1 & 0.6248 \\
		80 \& 285  & 1 & 0.7955 \\
		101 \& 240 & 5 & 0.9110 \\
		117 \& 196 & 2 & 0.8075 \\
		120 \& 285 & 1 & 0.7406 \\
		125 \& 325 & 5 & 0.9973 \\
		125 \& 624 & 5 & 0.9830 \\
		132 \& 425 & 2 & 0.6587 \\
		184 \& 441 & 5 & 0.9993 \\
		201 \& 642 & 4 & 0.8810 \\
		211 \& 845 & 4 & 0.9672 \\
		245 \& 691 & 2 & 0.9636 \\
		245 \& 869 & 3 & 0.9673 \\
		272 \& 587 & 3 & 0.9719 \\
		325 \& 624 & 2 & 0.9806 \\
		325 \& 959 & 5 & 0.9996 \\
		366 \& 974 & 1 & 0.5534 \\
		407 \& 429 & 3 & 0.9702 \\
		440 \& 920 & 1 & 0.5561 \\
		494 \& 883 & 3 & 0.9805 \\
		502 \& 691 & 2 & 0.9510 \\
		642 \& 691 & 2 & 0.9672 \\
		798 \& 960 & 2 & 0.5218 \\ \bottomrule
	\end{tabular}
\end{table}

\end{document}